\documentclass[11pt]{article}

\usepackage[toc,page]{appendix}
\usepackage[normalem]{ulem} 
\usepackage{amssymb} 
\usepackage{amsmath} 
\usepackage{amsthm} 
\usepackage{setspace} 
\usepackage{cite} 
\usepackage{hyperref}

\usepackage{tikz}
\usetikzlibrary{calc,positioning,shapes,shadows,arrows,fit}

\sloppy 
\usepackage[hmargin=2.5cm,vmargin=3cm]{geometry}


\newtheorem{definition}{Definition} 
\newtheorem{corollary}{Corollary} 
 
\newtheorem{claim}{Claim} 
\newtheorem{lemma}{Lemma} 
\newtheorem{theorem}{Theorem}

\newtheorem{proposition}{Proposition}
\newtheorem{remark}{Remark}

\sloppy 

\newcommand{\F}{\mathbb{F}}

\newcommand{\M}{\mathcal{M}}

\renewcommand{\angle}[1]{\mathopen{\langle} #1\mathclose{\rangle}}

\newcommand{\sgn}{\mbox{\small\rm sgn}}

\DeclareMathOperator{\rank}{\mbox{\small\rm rank}}

\DeclareMathOperator{\poly}{\mbox{\small\rm poly}}

\title{Identity Testing for +-Regular Noncommutative Arithmetic Circuits}

\author{V. Arvind\thanks{Institute of Mathematical Sciences (HBNI), Chennai,
    India, \texttt{email: arvind@imsc.res.in}} \and Pushkar S
  Joglekar\thanks{Vishwakarma Institute of Technology, Pune, India,
    \texttt{email: joglekar.pushkar@gmail.com}}  \and Partha
  Mukhopadhyay\thanks{Chennai Mathematical Institute, Chennai, India,
    \texttt{email: partham@cmi.ac.in}} \and S. Raja\thanks{Chennai Mathematical Institute, Chennai, India,
    \texttt{email: sraja@cmi.ac.in}}}
 
\begin{document} 

\maketitle

\begin{abstract} 
An efficient randomized polynomial identity test for noncommutative
polynomials given by noncommutative arithmetic circuits remains an
open problem. The main bottleneck to applying known techniques is that
a noncommutative circuit of size $s$ can compute a polynomial of
degree exponential in $s$ with a double-exponential number of nonzero
monomials. In this paper, which is a follow-up on our earlier article
\cite{AMR16}, we report some progress by dealing with two natural
subcases (both allow for polynomials of exponential degree and a
double exponential number of monomials):

\begin{enumerate}
 \item We consider \emph{$+$-regular} noncommutative circuits: these
   are homogeneous noncommutative circuits with the additional
   property that all the $+$-gates are layered, and in each $+$-layer
   all gates have the same syntactic degree. We give a
   \emph{white-box} polynomial-time deterministic polynomial identity
   test for such circuits. Our algorithm combines some new structural
   results for $+$-regular circuits with known results for
   noncommutative ABP identity testing \cite{raz05PIT}, rank bound of
   commutative depth three identities \cite{SS13}, and equivalence
   testing problem for words \cite{Loh15, MSU97, Pla94}.
  
\item Next, we consider $\Sigma\Pi^*\Sigma$ noncommutative circuits:
  these are noncommutative circuits with layered $+$-gates such that
  there are only two layers of $+$-gates. These $+$-layers are the
  output $+$-gate and linear forms at the bottom layer; between the
  $+$-layers the circuit could have any number of $\times$ gates.  We
  given an efficient randomized \emph{black-box} identity testing
  problem for $\Sigma\Pi^*\Sigma$ circuits. In particular, we show if
  $f\in\F\angle Z$ is a nonzero noncommutative polynomial computed by
  a $\Sigma\Pi^*\Sigma$ circuit of size $s$, then $f$ cannot be a
  polynomial identity for the matrix algebra $\mathbb{M}_s(F)$, where
  the field $F$ is a sufficiently large extension of $\F$ depending on
  the degree of $f$.
\end{enumerate}
\end{abstract}

\section{Introduction}

Noncommutative arithmetic computation is an important area of
algebraic complexity theory, introduced by Hyafil \cite{Hya77} and
Nisan \cite{N91}. The main algebraic structure of interest is the free
noncommutative ring $\F\angle{Z}$ over a field $\F$, where
$Z=\{z_1,z_2,\cdots,z_n\}$ is a set of $n$ free noncommuting
variables. The elements of $\F\angle{Z}$ are \emph{noncommutative
  polynomials} which are $\F$-linear combinations of \emph{monomials},
which, in turn, are words in $Z^*$.

An important algorithmic problem in this area is noncommutative
polynomial identity testing: The input is a noncommutative polynomial
$f\in \F \angle{Z}$ computed by a noncommutative arithmetic circuit
$C$.  The polynomial $f$ can be either given by black-box access to
$C$ (using which we can evaluate $C$ on matrices with entries from
$\F$ or an extension field), or the circuit $C$ may be explicitly
given as the input. The algorithmic problem is to check if the
polynomial computed by $C$ is identically zero.

We recall the formal definition of a noncommutative arithmetic
circuit.

\begin{definition}
A \emph{noncommutative arithmetic circuit} $C$ over a field $\F$ and
indeterminates $z_1,z_2,\cdots,z_n$ is a directed acyclic graph (DAG)
with each node of indegree zero labeled by a variable or a scalar
constant from $\F$: the indegree $0$ nodes are the input nodes of the
circuit. Each internal node of the DAG is of indegree two and is
labeled by either a $+$ or a $\times$ (indicating that it is a plus
gate or multiply gate, respectively). Furthermore, the two inputs to
each $\times$ gate are designated as left and right inputs which is
the order in which the gate multiplication is done. A gate of $C$ is
designated as \emph{output}.  Each internal gate computes a polynomial
(by adding or multiplying its input polynomials), where the polynomial
computed at an input node is just its label. The \emph{polynomial
  computed} by the circuit is the polynomial computed at its output
gate. An arithmetic circuit is a formula if the fan-out of every gate
is at most one.
\end{definition}

Notice that if the size of circuit $C$ is $s$ the degree of the
polynomial computed by $C$ can be $2^s$. In the earlier
result \footnote{ We also note here that Raz and Shpilka
  \cite{raz05PIT} give a white-box deterministic polynomial-time
  identity test for noncommutative algebraic branching programs
  (ABPs). The results of Forbes-Shpilka \cite{FS13} and Agrawal et al.,
  \cite{AGKS15} give (among other results) a quasi-polynomial time black-box algorithm for
  small degree noncommutative ABPs.} by Bogdanov and Wee \cite{BW05}, a randomized
polynomial-time algorithm was shown for the case when the degree of
the circuit $C$ is polynomially bounded in $s$ and $n$
\cite{BW05}. The idea of the algorithm is based on a classical result
of Amitsur-Levitzki \cite{AL}.  We recall the
Amitsur-Levitzki theorem.

\begin{theorem}[Amitsur-Levitzki Theorem]\label{thm_al}
For any field $\F$, a nonzero
noncommutative polynomial $P \in \F\angle{Z}$ of degree $\leq 2d-1$ cannot
be a polynomial identity for the matrix algebra $\mathbb{M}_d(\F)$.
I.e.\ $P$ does not vanish on all $d\times d$ matrices over $\F$.
\end{theorem}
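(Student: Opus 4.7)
My plan is to reduce to the case that $P$ is multilinear and then exhibit a specific substitution by elementary matrices that isolates a single monomial of $P$.

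First, by a standard polarization argument, I would reduce to the case that $P$ is multilinear of degree $k \leq 2d-1$. Concretely, passing to the top-degree homogeneous component of $P$ and then replacing each repeated variable $z_i$ by a sum of fresh variables $y_i^{(1)} + \cdots + y_i^{(e_i)}$, the multilinear component in the $y_i^{(j)}$'s is a nonzero multilinear polynomial $Q$ of degree $k$; and if $P$ vanishes on all substitutions by $d\times d$ matrices then so does $Q$. Hence it suffices to prove the theorem for $P$ multilinear.

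Write a multilinear $P = \sum_{\sigma \in S_k} c_\sigma\, x_{\sigma(1)} x_{\sigma(2)} \cdots x_{\sigma(k)}$ with $c_\tau \neq 0$ for some $\tau$; after relabeling variables I may assume $\tau$ is the identity. I would choose elementary matrices $X_l = E_{a_l,b_l} \in \mathbb{M}_d(\F)$ so that $X_1 X_2 \cdots X_k = E_{a_1,b_k} \neq 0$ while for \emph{every} non-identity $\sigma$ the permuted product vanishes; then $P(X_1,\ldots,X_k) = c_\tau\, E_{a_1,b_k} \neq 0$. Since $E_{i,j}E_{k,l} = \delta_{jk}\, E_{i,l}$, the product $X_{\sigma(1)}\cdots X_{\sigma(k)}$ is nonzero iff the edge sequence $(a_{\sigma(l)}, b_{\sigma(l)})_l$ is a directed walk in the multigraph on $\{1,\ldots,d\}$ defined by the chosen edges. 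So the problem reduces to finding a $k$-edge directed multigraph on at most $d$ vertices possessing a \emph{unique} Eulerian walk.

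For $k \leq 2d-1$, I propose the edge list $(1,1),(1,2),(2,2),(2,3),(3,3),(3,4),\ldots$ truncated to $k$ entries, which uses at most $d$ vertices and is an Eulerian walk when read in order. Uniqueness is forced inductively: at each vertex $v > 1$ the only incoming edge besides the self-loop $(v,v)$ is $(v-1,v)$, so any walk that leaves $v$ before using $(v,v)$ can never return; hence the self-loop at $v$ must be traversed immediately upon first arrival, which rigidly determines the walk. I expect the main obstacle in writing this up carefully to be exactly this uniqueness verification against all $k!$ orderings (rather than merely local swaps), which is why the specific self-loop construction is needed. A secondary technical point is the polarization step in small characteristic, where one must choose the multilinearization to avoid zero combinatorial multiplicities; this is standard but needs care.
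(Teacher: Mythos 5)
The paper itself offers no proof of this statement --- it is quoted as the classical lower-bound half of Amitsur--Levitzki from \cite{AL} --- so your argument can only be compared with the standard proof, and your staircase construction is exactly that standard proof. Its combinatorial core is correct: with the edge list $(1,1),(1,2),(2,2),(2,3),\ldots$ truncated to $k\le 2d-1$ entries you use at most $d$ vertices, every edge goes from $v$ to $v$ or from $v$ to $v+1$, so a walk can never return to a vertex it has left; hence each self-loop must be traversed immediately upon the (unique) arrival at its vertex, the ordering of the edges into an Eulerian walk is unique, and $Q(X_1,\ldots,X_k)=c_\tau E_{1,b_k}\neq 0$ over any field. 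The uniqueness verification you worry about is genuinely only this monotonicity argument, so that part is fine.

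The genuine gap is in the reduction to the multilinear case, and it is not where you locate it. The step ``pass to the top-degree homogeneous component of $P$'' silently uses the fact that the homogeneous components of a polynomial identity of $\mathbb{M}_d(\F)$ are again identities; that is proved by a scaling (Vandermonde) argument and requires $|\F|>\deg P$, whereas the theorem is asserted for an arbitrary field. Over small fields the inference fails: $z^2+z$ vanishes identically on $\mathbb{M}_1(\mathbb{F}_2)$, but its top homogeneous component $z^2$ does not (this is not a counterexample to the theorem, only to your reduction step). The standard repair avoids homogeneous components entirely: split off, by substituting $0$ for a variable, the monomials not containing that variable, and apply the polarization operators $f\mapsto f(x+y,\ldots)-f(x,\ldots)-f(y,\ldots)$; both operations preserve being an identity over any field, and they preserve nonzeroness because in the free algebra polarization maps distinct words to disjoint sets of distinct words, each carrying the original coefficient. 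This last point also disposes of the concern you flag about ``zero combinatorial multiplicities'' in small characteristic: in the noncommutative setting no multinomial coefficients ever collapse, so characteristic is a non-issue; the only real delicacy is the field-size dependence of extracting homogeneous (or multihomogeneous) components. With that one step replaced by the substitution/polarization argument, your proof is complete; and for the way the theorem is actually used in this paper, where one may pass to an extension field of size at least $4d>\deg P$, your original reduction is already valid.
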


Bogdanov and Wee's randomized PIT algorithm \cite{BW05} applies the
above theorem to obtain a randomized PIT as follows: Let
$C(z_1,z_2,\cdots,z_n)$ be a circuit of syntactic degree bounded by
$2d-1$. For each $i\in [n]$, substitute the variable $z_i$ by a
$d\times d$ matrix $M_i$ of commuting indeterminates.  More precisely,
the $(\ell,k)^{th}$ entry of $M_i$ is $z^{(i)}_{\ell,k}$ where $1\leq
\ell,k\leq d$. By Theorem \ref{thm_al}, the matrix $M_f = f(M_1, M_2,
\ldots, M_n)$ is not identically zero. Hence, in $M_f$ there is an
entry $(\ell',k')$ which has the \emph{commutative} nonzero polynomial
$g_{\ell',k'}$ over the variables $\{z^{(i)}_{\ell,k} : 1\leq i\leq n,
1\leq \ell,k\leq d\}$. Notice that the degree of the polynomial
$g_{\ell',k'}$ is at most $2d-1$. If we choose an extension field of
$\F$ of size at least $4d$, then we get a randomized polynomial
identity testing algorithm by the standard
Schwartz-Zippel-Lipton-DeMello Lemma \cite{Sch80,Zippel79,DL78}. 


The problem with this approach for general noncommutative circuits
(whose degree can be $2^s$) is that the dimension of the matrices
grows linearly with the degree of the polynomial. Therefore, this
approach only yields a randomized exponential-time algorithm for the
problem. Finding an efficient randomized identity test for general
noncommutative circuits is a well-known open problem, as mentioned in
a recent workshop on algebraic complexity theory\cite{WACT16}.

In \cite{AMR16}, we made partial progress on this problem: we gave an
efficient randomized black-box polynomial identity test for
noncommutative arithmetic circuits that compute a polynomial with
\emph{exponentially many} monomials \cite{AMR16}. However, in general
noncommutative circuits of size $O(s)$ can compute polynomials with
$2^{2^{s}}$ monomials. For example the polynomial $f(x,y)=(x+y)^{2^s}$
has noncommutative circuit of size $O(s)$ but the number of monomials
is $2^{2^s}$.

\section{Main Results} 

We first consider identity testing for a subclass of homogeneous
noncommutative circuits, that we call $+$-regular circuits. These are
syntactic homogeneous circuits where the $+$-gates can be partitioned
into layers such that: (i) there is no directed paths between the
$+$-gates in a same layer, and (ii) all $+$-gates in a layer have the
same syntactic degree.  The output gate is a $+$ gate. We give a
deterministic white box polynomial-time algorithm that tests whether a
given noncommutative $+$-regular circuit of size $s$ computes the
identically zero polynomial.

\begin{theorem}\label{regular-thm}
Let $C$ be a noncommutative $+$-regular circuit of size $s$ given as a
white-box computing a polynomial in $\F\angle X$. There is a
deterministic polynomial-time algorithm that tests whether $C$
computes the identically zero polynomial.
\end{theorem}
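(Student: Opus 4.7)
The plan is to proceed by induction on the number $L$ of $+$-layers in $C$. A first structural observation I would isolate is that, since every internal gate has fan-in $2$ and same-layer $+$-gates admit no directed path between them, the subcircuit between consecutive $+$-layers is a DAG of $\times$-gates whose leaves lie at strictly lower $+$-layers or are input variables. Hence every $+$-gate at layer $i$ computes a polynomial of the form $A+B$, where $A$ and $B$ are (succinctly presented) noncommutative monomials in the polynomials computed at lower $+$-layers and in the input variables.

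For the base case $L=1$, the output polynomial reduces to $M_1+M_2$, where each $M_i$ is a noncommutative word over $Z$ computed by a $\times$-subcircuit. Although $M_1$ and $M_2$ may have exponential length, equivalence of two words presented so compactly is decidable in polynomial time \cite{Pla94, MSU97, Loh15}, and one just checks whether the two words coincide and whether the scalar coefficients cancel.

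For the inductive step, let $P_1,\dots,P_k$ be the polynomials computed by the bottom $+$-layer of a circuit with $L+1$ layers. Replacing each $P_j$ by a fresh noncommuting variable $y_j$ (and keeping the input variables $x_i$ that feed gates above the bottom layer) produces a $+$-regular circuit $\tilde C$ on $L$ layers computing $\tilde f(y,x)\in \F\angle{y_1,\dots,y_k,x_1,\dots,x_n}$. Clearly $\tilde f\equiv 0$ implies $f\equiv 0$, and by induction $\tilde f\equiv 0$ is decidable in polynomial time. To close the loop I would first use the base-case word-equivalence test together with linear algebra over $\F$ to quotient out all $\F$-linear dependencies among the $P_j$'s, relying on the noncommutative ABP PIT of \cite{raz05PIT} to certify each linear combination of the $P_j$'s encountered along the way.

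The main obstacle --- and the reason all three listed ingredients are needed --- is the possibility of genuinely higher-order identities of the form $\sum_w \alpha_w\, w(P_1,\dots,P_k)\equiv 0$, where $w$ ranges over noncommutative monomials in the $P_j$'s of length $>1$. To control these I would use the commutative depth-three rank bound of \cite{SS13} after a generic commutative/matrix substitution: any such identity collapses to a depth-three commutative identity whose rank, bounded by \cite{SS13}, limits the dimension in which a nonvanishing witness must appear. Inside that bounded dimension, the Raz--Shpilka ABP PIT tests the resulting restricted polynomial in polynomial time, and feeding the analysis back into the inductive hypothesis on $L$ completes the algorithm.
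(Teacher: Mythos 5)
There is a genuine gap, and it sits exactly at the point you call ``the main obstacle.'' You conjecture that after quotienting out $\F$-linear dependencies among the bottom-layer polynomials one must still fight ``higher-order identities'' $\sum_w \alpha_w\, w(P_1,\dots,P_k)\equiv 0$, and you propose to control them by collapsing to a commutative depth-three identity and invoking the rank bound of \cite{SS13}. This is not an argument: it is never explained how such an identity ``collapses'' to a depth-three commutative identity, nor how a rank bound would yield a decision procedure. In fact the correct resolution is the opposite of what you assume: for \emph{homogeneous} polynomials of equal degree, linearly independent $P_i$'s behave exactly like fresh free variables, so no higher-order identities exist at all. This is the paper's key structural result (Theorem~\ref{subst}): if $\{R_1,\dots,R_k\}$ is a maximal linearly independent subset of the homogeneous same-degree polynomials $R_1,\dots,R_s$, then $T(R_1,\dots,R_s)\equiv 0$ iff $T(y_1,\dots,y_k,\ell_{k+1},\dots,\ell_s)\equiv 0$, where the $\ell_j$ are the linear forms recording the dependencies. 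Its proof is elementary (pick monomials witnessing a full-rank coefficient matrix, project, and undo by an invertible change of coordinates position-by-position), and it is what licenses the layer-peeling recursion you want; \cite{SS13} plays no role in it. Without this lemma your inductive step only gives the easy direction ($\tilde f\equiv 0\Rightarrow f\equiv 0$), and your substitute for the hard direction does not go through.

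The second gap is algorithmic: even granting the reduction, you must actually \emph{find} a maximal $\F$-linearly independent subset of the polynomials computed just below the second $+$-layer and express the rest as linear combinations. Unwound to the input variables these are products of homogeneous linear forms of possibly exponential degree ($\Pi^*\Sigma$ circuits), with doubly-exponentially many monomials. Your plan --- ``base-case word-equivalence plus linear algebra, with \cite{raz05PIT} certifying each linear combination'' --- does not handle this: compressed-word equivalence \cite{Loh15,Pla94,MSU97} only detects when two products are scalar multiples of each other, naive linear algebra cannot be performed on polynomials you cannot write down, and the Raz--Shpilka test \cite{raz05PIT} applies only to ABPs of polynomially bounded degree. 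This is where \cite{SS13} is actually needed in the paper: the rank bound for simple, minimal commutative depth-three identities (applied after set-multilinearization) shows that if $P_i$ is a linear combination of other products, then they can disagree in only $O(\ell^4)$ positions; the compressed-word algorithms locate those $O(\ell^5)$ positions, the identical linear forms elsewhere can be dropped (Claim~\ref{set-mult}), and only then do the polynomials become low-degree ABPs to which the Raz--Shpilka machinery applies. In short, you have the right ingredients but assign them to the wrong tasks, and both the correctness of ``linear dependencies suffice'' and the polynomial-time computability of those dependencies are left unproved.
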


Next, we consider $\Sigma\Pi^*\Sigma$ noncommutative circuits. These
are noncommutative circuits with layered $+$-gates such that there are
only two layers of $+$-gates. These $+$-layers are the output $+$-gate
and linear forms at the bottom layer; between the $+$-layers the
circuit could have any number of $\times$ gates. We give an efficient
randomized black-box polynomial identity test for $\Sigma\Pi^*\Sigma$
circuits. More precisely, we show the following result. 

\begin{theorem}\label{depth-3-pit}
Let $\F$ be a field of size more than $D$. Let
$f(x_1,\ldots,x_n)\in\F\angle X$ be a nonzero polynomial of degree $D$
computed by a homogeneous $\Sigma\Pi^{*}\Sigma$ circuit with top gate
fan-in $s$ and the fan-in of the product gates bounded by $D$. Then
$f$ cannot be a polynomial identity for the matrix algebra
$\mathbb{M}_s(\F)$.
\end{theorem}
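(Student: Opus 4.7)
The plan is to combine the Nisan rank bound for $\Sigma\Pi^{*}\Sigma$ circuits with a generic $s\times s$ matrix substitution. First, since the circuit is homogeneous, I would write $f=\sum_{i=1}^{s}T_i$ with $T_i=\ell_{i,1}\cdots\ell_{i,D}$ a product of $D$ homogeneous linear forms. Each $T_i$ factors as a prefix product times a suffix product at every cut, so its Nisan matrix $M_k(T_i)$ has rank $1$; hence $\rank(M_k(f))\le s$ for every cut $k$. By Nisan's theorem, $f$ admits a noncommutative ABP of width at most $s$, say $f(x)=a^{T}M_1(x)\cdots M_D(x)\,b$ with $M_t(x)=\sum_k M_{t,k}x_k$ and $M_{t,k}\in\mathbb{M}_s(\F)$.

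Second, I would substitute $x_k\mapsto X_k$ where $X_k$ is a generic $s\times s$ matrix whose entries $y^{(k)}_{p,q}$ are fresh commuting indeterminates, and aim to show that $f(X_1,\ldots,X_n)\in\mathbb{M}_s(\F[y])$ is nonzero as a matrix of polynomials. Once that is established, the Schwartz--Zippel lemma over a sufficiently large extension of $\F$ yields concrete matrices in $\mathbb{M}_s$ on which $f$ does not vanish. A direct computation gives
\[
[f(X)]_{p,q}=\sum_{w\in[n]^D}c_w\sum_{\substack{i_0=p,\,i_D=q\\ i_1,\ldots,i_{D-1}\in[s]}}\prod_{t=1}^{D}y^{(w_t)}_{i_{t-1},i_t},
\]
where $c_w=a^{T}M_{1,w_1}\cdots M_{D,w_D}\,b$ is the coefficient of the word $w=w_1\cdots w_D$ in $f$. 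It then suffices to exhibit $p,q,w^{\star}$ and a sequence $(i_0,\ldots,i_D)$ in $[s]$ with $c_{w^{\star}}\ne 0$ for which the $D$ triples $(w^{\star}_t,i_{t-1},i_t)$ are pairwise distinct: the monomial $\prod_t y^{(w^{\star}_t)}_{i_{t-1},i_t}$ then survives in $[f(X)]_{p,q}$ with coefficient exactly $c_{w^{\star}}$.

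The main obstacle is producing this distinct-triples witness in full generality, especially when $D$ greatly exceeds $s^{2}$: edge repetition in the path $(i_0,\ldots,i_D)$ in the complete digraph on $[s]$ becomes unavoidable, and repetitions must be routed to time steps where the letter $w^{\star}_t$ changes. My plan is to exploit the $\Sigma\Pi^{*}\Sigma$ structure rather than only the ABP width, selecting a cut $k$ at which some entry of $M_k(f)$ is contributed to by only one of the terms $T_i$ (which exists because each $T_i$'s Nisan matrix is an outer product of its prefix and suffix coefficient vectors), and then reading off $w^{\star}$ from the unique $T_i$ so that the letter multiplicities of $w^{\star}$ are dispersed enough to admit a legal routing in $[s]$. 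A robust fallback is to use the tensor identity
\[
f(X)=(a^{T}\otimes I_s)\,\prod_{t=1}^{D}\Bigl(\sum_k M_{t,k}\otimes X_k\Bigr)\,(b\otimes I_s),
\]
which recasts non-vanishing of $f(X)$ as a rank statement about a product of $s^{2}\times s^{2}$ factors and can be attacked by a genericity argument on the $X_k$ combined with the nontriviality of the ABP encoded in the $M_{t,k}$.
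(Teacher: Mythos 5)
Your first step (each $T_i$ has Nisan rank $1$ at every cut, so $f$ has an ABP of width at most $s$) is fine, but the heart of your argument is missing, and you say so yourself: you never produce the witness $(p,q,w^{\star},i_0,\ldots,i_D)$, and the criterion you propose for it is not even sufficient. Requiring the $D$ triples $(w^{\star}_t,i_{t-1},i_t)$ to be pairwise distinct only guarantees that the commutative monomial $\prod_t y^{(w^{\star}_t)}_{i_{t-1},i_t}$ is multilinear; it does \emph{not} guarantee that its coefficient in $[f(X)]_{p,q}$ is $c_{w^{\star}}$, because the variables commute and the same multiset of edge-variables can be realized by other (word, path) pairs $(w',i')$ with $c_{w'}\neq 0$ (any other Eulerian-type reordering of those edges), so cancellations are not ruled out. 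Moreover, when $D\gg s^2$ edge repetition is forced, and your plan to handle this --- picking a cut where ``some entry of $M_k(f)$ is contributed to by only one $T_i$'' and hoping the letter multiplicities of $w^{\star}$ are ``dispersed enough'' --- is not established: the rank-one Nisan matrices of the $T_i$ can overlap in every entry at every cut, and no argument is given that a legal routing exists even if such a cut did. The tensor-product ``fallback'' is likewise only a restatement of what must be proved (non-vanishing of a product of $s^2\times s^2$ matrices under generic $X_k$), with the genericity argument left entirely open. So as it stands the proposal does not prove the theorem, and it is really attempting a stronger statement (every nonzero width-$s$ ABP polynomial is not an identity for $\mathbb{M}_s$) that needs its own, different proof.

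The idea you are missing is the one the paper uses to exploit the $\Sigma\Pi^{*}\Sigma$ structure directly: an isolating-positions lemma (Lemma~\ref{isolating-set}) showing, by induction on the top fan-in $s$, that for products of homogeneous linear forms $P_1,\ldots,P_s$ there is a set $I\subseteq[D]$ of at most $s-1$ positions such that $\sum_i\beta_iP_i=0$ iff the sum of the $I$-projections (keep positions in $I$ noncommutative, make all other positions commutative) is zero. The induction works position by position: at the first position $j_0$ where the linear forms $L_{1,j_0},\ldots,L_{s,j_0}$ are not all proportional, one applies an invertible change of variables at that position (Proposition~\ref{invert}), regroups into $\sum_k x_kP''_k$ with at most $s-1$ products inside some nonzero $P''_k$, and recurses. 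Once the effective noncommutative degree is reduced to $k\le s-1$, a nondeterministic substitution automaton with $k+1\le s$ states guesses the positions of $I$; its transition matrices are $(k+1)\times(k+1)$ bidiagonal matrices of commuting indeterminates, the block variables $\xi_j$ keep contributions of different guesses $J$ monomial-disjoint, and the $(0,k)$ entry of $f$ evaluated at these matrices is a nonzero commutative polynomial of degree $D$, which does not vanish identically over a field of size more than $D$. That projection step is exactly what lets the paper sidestep the routing problem you ran into; without it (or a genuinely new argument for the width-$s$ ABP claim), your proof does not go through.
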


\begin{corollary}\label{depth-3-algo}
Let $C$ be a homogeneous $\Sigma\Pi^*\Sigma$ circuit of size $s$
computing a polynomial $f(x_1,\ldots,x_n)\in \F\angle X$, where $C$ is
given by black-box access. There is a randomized $\poly(s,n)$ time
algorithm that checks whether $f$ is identically zero.
\end{corollary}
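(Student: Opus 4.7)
The plan is to derive Corollary~\ref{depth-3-algo} directly from Theorem~\ref{depth-3-pit} via the standard generic-matrix substitution followed by the Schwartz--Zippel lemma, essentially reusing the Bogdanov--Wee template already recalled above.

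First I would verify that the parameters appearing in Theorem~\ref{depth-3-pit} are all bounded by $s$. A size-$s$ homogeneous $\Sigma\Pi^{*}\Sigma$ circuit has top $+$-gate of fan-in at most $s$, and every product gate has fan-in at most $s$; since the inputs to the product layer are linear forms, the total degree $D$ of $f$ is at most $s$. Hence, after passing if necessary to an extension $\F'$ of $\F$ of size exceeding $s$, Theorem~\ref{depth-3-pit} guarantees that $f$ is not a polynomial identity of $\mathbb{M}_{s}(\F')$. Equivalently, if we substitute each $x_i$ by an $s \times s$ generic matrix $M_i$ whose entries $z^{(i)}_{\ell,k}$ are fresh commuting indeterminates, then at least one entry of $f(M_1,\ldots,M_n)$ is a nonzero commutative polynomial in the $ns^{2}$ new variables, of total degree at most $D \le s$.

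The algorithm is then immediate: choose an extension field $\F''$ of $\F$ of size $\Theta(s)$, sample each $z^{(i)}_{\ell,k}$ independently and uniformly from a subset of $\F''$ of size $4s$, build the concrete matrices $M_i \in \mathbb{M}_{s}(\F'')$, query the black box on the tuple $(M_1,\ldots,M_n)$, and output \emph{zero} iff the returned matrix is the zero matrix. The Schwartz--Zippel lemma ensures that the distinguished nonzero entry stays nonzero with probability at least $3/4$, which can be amplified by independent repetition. The time and query complexity are both $\poly(s,n)$, since each of the $ns^{2}$ random field elements is drawn from a set of size $\poly(s)$ and each black-box query is on $s \times s$ matrices. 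The only nontrivial ingredient is Theorem~\ref{depth-3-pit}; everything else is routine bookkeeping, so the real obstacle lies in proving that theorem, not in deducing its corollary.
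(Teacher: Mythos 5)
There is a genuine error in your parameter analysis: the claim that a size-$s$ homogeneous $\Sigma\Pi^{*}\Sigma$ circuit computes a polynomial of degree $D\le s$ is false. The $\Pi^{*}$ part is an arbitrary multiplicative circuit sitting between the two $+$-layers, so a size-$s$ circuit can compute a product of $D=2^{\Theta(s)}$ linear forms; indeed, this is the entire point of the model in this paper. If $D$ were really at most $s$, Theorem~\ref{depth-3-pit} would be subsumed by the Amitsur--Levitzki/Bogdanov--Wee argument already recalled in the introduction, and there would be nothing to prove. Concretely, your step ``pass to an extension $\F'$ of size exceeding $s$'' does not meet the hypothesis of Theorem~\ref{depth-3-pit}, which requires $|\F|>D$, and your Schwartz--Zippel sample set of size $4s$ is far too small: the nonzero entry of $f(M_1,\ldots,M_n)$ has degree up to $D=2^{O(s)}$, so sampling from a set of size $4s$ gives no useful error bound.

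The good news is that the deduction survives once the degree is handled correctly: take an extension field of size more than $4D\le 2^{s+2}$ (its elements have $O(s)$-bit descriptions, so arithmetic and each black-box query on $s\times s$ matrices still cost $\poly(s,n)$), and then your generic-matrix substitution plus Schwartz--Zippel gives error at most $D/|S|\le 1/4$. With that repair your route is essentially the paper's: the paper plugs random field elements into the entries of its structured $(k+1)\times(k+1)$ automaton transition matrices (with $k\le s-1$) and tests the designated $(0,k)$ entry, rather than using fully generic $s\times s$ matrices, but both are instances of the same template once Theorem~\ref{depth-3-pit} is in hand. As written, however, the proposal's degree bound and field-size choice are wrong and must be fixed.
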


\subsection*{Outline of the proofs}

We give an informal outline of the proofs for Theorem
\ref{regular-thm}, and Theorem \ref{depth-3-pit}.

\subsection*{White-box algorithm for $+$-regular circuits}

Now we informally describe the proof of Theorem \ref{regular-thm}. We
note a crucial observation: Let $T(z_1,\ldots,z_s)$ be a homogeneous
noncommutative polynomial of degree $d$. Let $R_1, \ldots,R_s$ be
homogeneous noncommutative polynomials each of degree $d'$. Consider
any maximal $\F$-linearly independent subset of the polynomials $R_1,
\ldots, R_s$.  Let $R_1, \ldots, R_k$ be such a set. We can express
$R_j = \sum_{i=1}^k \alpha_{ji} R_i$ for $k+1\leq j\leq s$ where
$\alpha_{ji}\in \F$. Then it turns out that $T(R_1, \ldots, R_k,
\sum_{i=1}^k \alpha_{k+1 i} R_i, \ldots, \sum_{i=1}^k \alpha_{s i}
R_i)=0$ if and only if $T(y_1, \ldots, y_k, \sum_{i=1}^k \alpha_{k+1
  i} y_i, \ldots, \sum_{i=1}^k \alpha_{s i} y_i)=0$ where
$y_1,\ldots,y_k$ are fresh noncommuting variables. As a consequence,
it turns out that for a deterministic polynomial-time white-box
identity testing for $+$-regular circuits, it suffices to solve the
following computational problem:

Let $P_1,\ldots, P_{\ell} \in\F\angle X$ be products of homogeneous
linear forms given by \emph{multiplicative circuits} of size $s$. The
degrees of the polynomials $P_i$ could be exponential in $s$. Then
find a maximal $\F$-linearly independent subset of the polynomials and
express the others as linear combination of the independent
polynomials.  We solve the above problem in deterministic polynomial
time. We prove that it suffices to replace $P_i$ with $\tilde{P}_i$
which is obtained from $P_i$ by retaining, in the product, only linear
forms that appear in at most $\ell^5$ locations (roughly).  This is shown using
a rank bound result of commutative depth three identities \cite{SS13}.
We also require algorithms \cite{Loh15, Pla94, MSU97} over words to
efficiently find the linear forms appearing in those $\ell^5$
locations.  Since $\tilde{P}_i : 1\leq i\leq \ell$ are small degree,
we are in the usual regime of low-degree noncommutative polynomials,
and can adapt the noncommutative ABP identity testing \cite{raz05PIT}
to solve the linear independence testing problem.

\subsection*{Black-box algorithm for homogeneous $\Sigma\Pi^*\Sigma$ circuits}

Now, we briefly sketch the proof of Theorem \ref{depth-3-pit}. Suppose
$P_1, P_2,\ldots,P_s$ are $D$-products of homogeneous linear forms in
$\F\angle X$. Consider any $\F$-linear combination $\sum_{i=1}^s
\beta_i P_i$ where, w.l.o.g $\forall i:
\beta_i\in\F\setminus\{0\}$. Then there is a subset of indices
$I\subseteq [D]$ with $|I|\leq s-1$ with the following property: For
each $i$, let $P_{i,I}$ be the polynomial obtained from $P_i$ by
treating only the variables appearing in positions in $I$ as
noncommutative, and variables in all other positions as
commutative. Then
\[
\sum_{i=1}^s \beta_i P_i = 0 \textrm{ iff } \sum_{i=1}^s \beta_i P_{i,I} = 0.
\]
Now, we can design small nondeterministic substitution automata that
can nondeterministically effect this transformation from $P_i$ to
$P_{i,I}$ for each $i$. guess the locations in $I$. The rest of the
proof is similar to the proof of Theorem~2 in \cite{AMR16}.


\section{Preliminaries}\label{prelim}

We state some useful properties of noncommutative polynomials.

\begin{proposition}\label{invert}
Let $A: \F^n\rightarrow\F^n$ be any invertible linear transformation,
and $f(x_1,x_2,\ldots,x_n)\in\F\angle X$ be any homogeneous polynomial
of degree $d$. Let $A_j(f)$ be the polynomial obtained by replacing
the variables $x_i$ appearing in the position $j\in[d]$ by
$A(x_i)$. Then $f(x_1,\ldots,x_n)\neq 0$ if and only if
$A_jf(x_1,\ldots,x_n)\neq 0$.
\end{proposition}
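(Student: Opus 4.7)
The plan is to identify $A_j$ as an invertible linear operator on the space of homogeneous noncommutative polynomials of degree $d$, from which the biconditional is immediate. The natural framework for this is to view the degree-$d$ homogeneous component of $\F\angle X$ as a tensor power.

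First, I would identify the $\F$-vector space of homogeneous degree-$d$ polynomials in $\F\angle X$ with the tensor power $V^{\otimes d}$, where $V = \F^n$ is the span of $x_1, \ldots, x_n$. Under this identification, a monomial $x_{i_1} x_{i_2} \cdots x_{i_d}$ corresponds to the pure tensor $x_{i_1} \otimes x_{i_2} \otimes \cdots \otimes x_{i_d}$, and $f$ is written as an $\F$-linear combination of such tensors. The replacement operation $A_j$ acts on each monomial by substituting $A(x_{i_j})$ in the $j$-th slot, which is precisely the tensor operator
\[
A_j \;=\; \mathrm{id}_V^{\otimes (j-1)} \,\otimes\, A \,\otimes\, \mathrm{id}_V^{\otimes (d-j)}.
\]

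Next, since $A$ is invertible on $V$, this tensor operator has an explicit two-sided inverse, namely
\[
B_j \;=\; \mathrm{id}_V^{\otimes (j-1)} \,\otimes\, A^{-1} \,\otimes\, \mathrm{id}_V^{\otimes (d-j)},
\]
and one checks $A_j \circ B_j = B_j \circ A_j = \mathrm{id}_{V^{\otimes d}}$ on pure tensors and extends by linearity. Hence $A_j$ is an $\F$-linear automorphism of the space of homogeneous degree-$d$ polynomials, so it sends nonzero polynomials to nonzero polynomials, giving $f \neq 0 \iff A_j(f) \neq 0$.

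There is no real obstacle here; the only point requiring a little care is making sure the substitution at position $j$ is compatible with the tensor interpretation, i.e., that expanding $A(x_{i_j}) = \sum_k a_{k,i_j} x_k$ inside the monomial indeed matches the action of $\mathrm{id} \otimes \cdots \otimes A \otimes \cdots \otimes \mathrm{id}$. Once this is verified on monomials, linearity finishes the argument. Alternatively, one can avoid tensor language and argue directly: define $B_j$ by the same rule using $A^{-1}$, verify $B_j(A_j(w)) = w$ for every word $w$ of length $d$ by noting that the operations at position $j$ commute with the untouched letters at other positions, and extend $\F$-linearly.
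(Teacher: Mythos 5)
Your proof is correct, but it takes a different route from the paper. The paper argues by decomposition: it writes $f=\sum_{(m_1,m_2)} m_1\, L_{m_1,m_2}\, m_2$, where $(m_1,m_2)$ ranges over pairs of monomials of degrees $j-1$ and $d-j$ and $L_{m_1,m_2}$ is the linear form sitting in position $j$ for that context; then $A_j(f)=\sum m_1\, A(L_{m_1,m_2})\, m_2$, and since some $L_{m_1,m_2}\neq 0$ and $A$ is invertible, that term survives (distinct contexts $(m_1,m_2)$ contribute disjoint monomials, so no cancellation), giving $A_j(f)\neq 0$. You instead show that $A_j$ is itself an $\F$-linear automorphism of the degree-$d$ homogeneous component, identified with $V^{\otimes d}$ for $V=\F^n$, by exhibiting the explicit inverse $\mathrm{id}^{\otimes(j-1)}\otimes A^{-1}\otimes \mathrm{id}^{\otimes(d-j)}$; the biconditional is then automatic and both directions come for free, with no need to track cancellations. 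Your argument is slightly more conceptual and generalizes immediately (e.g., to applying different invertible maps at several positions simultaneously), while the paper's witness-based argument stays elementary and introduces the ``linear form at position $j$'' viewpoint that it reuses later (for instance when it applies a map $A_{j_0}: L_{k,j_0}\mapsto x_k$ in the proof of Lemma~\ref{isolating-set}). The only point you should make explicit is the one you already flag: $A_j$ and its candidate inverse are defined monomial-wise and extended by linearity, and the composition fixes every word because the letters outside position $j$ are untouched; once that is said, the proof is complete.
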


\begin{proof}
If $f(x_1,\ldots,x_n)=0$ then clearly
$A_jf(x_1,\ldots,x_n)=0$. Suppose $f(x_1,\ldots,x_n)\neq 0$. Write $f
= \sum_{(m_1,m_2)} f_{m_1,m_2}$, where
\[
f_{m_1,m_2}= m_1 ~L_{m_1,m_2}~ m_2,
\]
and $(m_1,m_2)$ runs over monomial pairs such that $m_1$ is of degree
$j-1$ and $m_2$ is of degree $d-j$. Here, $L_{m_1,m_2}$ denotes the
linear form in the variables $x_1,x_2,\ldots,x_n$ occurring in
$j^{th}$ position when we collect together all monomials of the form
$m_1x_im_2$, $1\le i\le n$. Notice that
\[
A_j(f) = \sum_{m_1,m_2}A_j(f_{m_1,m_2}) = \sum_{m_1,m_2}m_1A(L_{m_1,m_2}) m_2.
\]

Since $f\ne 0$, for some pair $(m_1,m_2)$ we have $f_{m_1,m_2}\ne
0$. In particular, $L_{m_1,m_2}\ne 0$. Since $A$ is invertible, it
follow that $m_1 A(L_{m_1,m_2}) m_2\ne 0$. Therefore, $A_j(f)\ne 0$.
\end{proof}

Given any noncommutative polynomial $f(x_1,\ldots,x_n)\in\F\angle X$
of degree $d$, we can rename the variable $x_i : 1\leq i\leq n$
appearing in the position $j\in [d]$ (from the left), by a new
variable $x_{ij}$ and obtain the polynomial $g$. We say that $g$ is
the set-multilinear polynomial obtained from $f$. One can view the
polynomial $g$ as a commutative polynomial over the variables $x_{ij}
: 1\leq i\leq n, 1\leq j\leq d$. We state a simple fact.
 
\begin{claim}\label{set-mult}
Let $f(x_1,\ldots,x_n)\in\F\angle X$ be any noncommutative polynomial
of degree $d$. For $1\leq i\leq n$, replace the variable $x_i$
appearing in the position $1\leq j\leq d$ by a new variable
$x_{ij}$. Let the new polynomial be $g(x_{11}, \ldots,x_{1d},
\ldots,x_{n1},\ldots,x_{nd})$. Then $f=0$ if and only if $g=0$.
\end{claim}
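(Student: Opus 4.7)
The plan is to exhibit an explicit bijection between the noncommutative monomials appearing in $f$ and the commutative monomials appearing in $g$, and to observe that coefficients are preserved under this bijection. The forward implication $f = 0 \Rightarrow g = 0$ is immediate from the construction: the renaming is a purely symbolic operation that replaces each noncommutative monomial by a specific commutative monomial, without merging the coefficients of distinct monomials.

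For the reverse direction, I would write an arbitrary monomial of $f$ as $m = x_{i_1} x_{i_2} \cdots x_{i_{d'}}$ with $d' \le d$ and $i_j \in [n]$, and associate with it the commutative monomial $\tilde m = \prod_{j=1}^{d'} x_{i_j, j}$. The key observation is that this map is \emph{injective}: from $\tilde m$ the second subscript of each factor records the position it occupied in $m$, so one recovers the sequence $(i_1,\ldots,i_{d'})$, and hence $m$, unambiguously. In particular, distinct noncommutative monomials, even of different degrees, are sent to distinct commutative monomials in the variables $x_{ij}$.

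Consequently, the coefficient of $m$ in $f$ equals the coefficient of $\tilde m$ in $g$, where $g$ is regarded as a commutative polynomial. If $g = 0$, then every such coefficient vanishes, so every monomial of $f$ has zero coefficient, and therefore $f = 0$. The argument is purely combinatorial, so there is no real obstacle; the only point worth noting is that encoding the position into the second subscript is precisely what makes the bijection work even when $f$ is inhomogeneous, preventing monomials of different lengths from colliding in $g$.
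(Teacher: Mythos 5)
Your proof is correct and is essentially the paper's argument: the paper also dispatches the claim by observing that the position-tagged renaming puts the monomials of $f$ and $g$ in one-to-one coefficient-preserving correspondence, which you simply spell out in more detail (including the injectivity via reading off positions from the second subscripts, which also handles the inhomogeneous case).
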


\begin{proof}
The proof follows simply from the observation that the monomials in
$f$ and $g$ are in one-one correspondence.
\end{proof}

\section{A deterministic PIT for $+$-regular circuits}\label{pit-regular}

In this section we consider noncommutative \emph{$+$-regular} circuits
defined below. These circuits can compute polynomials of exponential
degree and a double-exponential number of monomials. However,
exploiting their structure we can give a white-box deterministic
polynomial time identity test for $+$-regular circuits that proves
Theorem \ref{regular-thm}.

\begin{definition}
A noncommutative circuit $C$, computing a polynomial in $\F\angle X$
where $X=\{x_1,x_2,\ldots,x_n\}$, is \emph{$+$-regular} if it satisfies the following properties:
\begin{itemize}
\item the circuit is homogeneous.
\item The $+$ gates in the circuit are partitioned into layers such
  that if $g_1$ and $g_2$ are $+$ gates in a layer then there is no
  directed path in the circuit between $g_1$ and $g_2$.
\item The output gate is a $+$ gate in a separate layer.
\item all $+$ gates in a layer are of the same syntactic degree.
\item every input-to-output path in the circuit goes through exactly
  one $+$ gate in each layer.
\end{itemize}
\end{definition}

A simple case of $+$-regular circuits are homogeneous
$\Sigma\Pi^*\Sigma$ circuits which are defined as follows.

\begin{definition}
  A noncommutative arithmetic circuit $C$ is called a \emph{homogeneous}
  $\Sigma\Pi^*\Sigma$ circuit if it satisfies following properties:
  \begin{itemize}
\item The output gate is a $\Sigma$ gate.
\item All inputs to the output gate are $\Pi$ gates
of the same syntactic degree.
\item In the circuit, every input to output path goes through a
  $\Sigma$ gate (which computes a homogeneous linear form
  $\sum_{i=1}^n\alpha_{ij}x_i$ in the input variables
  $x_1,x_2\ldots,x_n$) followed by one or more $\Pi$ gates and ends at
  the output $\Sigma$ gate.
\end{itemize}
Likewise, we can define $\Pi^*\Sigma$ circuits.
\end{definition}

\begin{remark}
We note that, in the commutative setting, regular formulas are considered by Kayal et al. \cite{KSS14}. However, their model of regular formulas is restricted than our model of $+$-regular circuits. 
\end{remark}

The following theorem is crucial to our PIT for $+$-regular circuits.

\begin{theorem}\label{subst}
Let $T(z_1,z_2,\ldots,z_s)$ be a noncommutative homogeneous degree-$d$
polynomial over a field $\F$ in noncommuting variables
$z_1,z_2,\ldots,z_s$. Let $R_1,R_2,\ldots,R_s$ be noncommutative
homogeneous degree $d'$ polynomials in variables $x_1,x_2,\ldots,x_n$
over $\F$ such that $\{R_1,R_2,\ldots,R_k\}$ is a maximal linearly
independent subset of $\{R_1,R_2,\ldots,R_s\}$ over $\F$, where
\[
R_j=\sum_{i=1}^k\alpha_{ji}R_i,~~k+1\le j\le s,~~\alpha_{ji}\in\F.
\]
For fresh noncommuting variables $y_1,y_2,\ldots,y_k$ define
linear forms
\[
\ell_j=\sum_{i=1}^k\alpha_{ji}y_i,~~k+1\le j\le s.
\]
Then $T(R_1,R_2,\ldots,R_s)\equiv 0$ if and only if
$T(y_1,y_2,\ldots,y_k,\ell_{k+1},\ldots,\ell_s)\equiv 0$.
\end{theorem}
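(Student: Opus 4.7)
The plan is to prove both directions of the equivalence by expanding $T$ as a $\F$-linear combination of its degree-$d$ monomials in $z_1,\ldots,z_s$ and then comparing coefficients on the two sides. Write $T(z_1,\ldots,z_s) = \sum_{w \in [s]^d} c_w\, z_{w_1}z_{w_2}\cdots z_{w_d}$ and, to unify notation, set $\beta_{ji} := \alpha_{ji}$ for $k+1 \le j \le s$ and $\beta_{ji} := \delta_{ji}$ for $1 \le j \le k$. Then the ``$z_j \mapsto y_j$ for $j \le k$, $z_j \mapsto \ell_j$ for $j>k$'' substitution and the ``$z_j \mapsto R_j$'' substitution can both be written uniformly using the same coefficients $\beta_{ji}$.

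The easy direction ($y$-side zero $\Rightarrow$ $R$-side zero) is immediate: substituting $y_i = R_i$ in $T(y_1,\ldots,y_k,\ell_{k+1},\ldots,\ell_s)$ carries $\ell_j$ to $\sum_i \alpha_{ji} R_i = R_j$, so it produces $T(R_1,\ldots,R_s)$. For the harder direction, I would expand each $z_{w_t}$ in both expressions via $\beta$'s. On the $y$-side this yields
\[
T(y_1,\ldots,y_k,\ell_{k+1},\ldots,\ell_s) \;=\; \sum_{u \in [k]^d}\Bigl(\sum_{w \in [s]^d} c_w \prod_{t=1}^{d}\beta_{w_t,u_t}\Bigr)\, y_{u_1}\cdots y_{u_d},
\]
and on the $R$-side the same manipulation gives
\[
T(R_1,\ldots,R_s) \;=\; \sum_{u \in [k]^d}\Bigl(\sum_{w \in [s]^d} c_w \prod_{t=1}^{d}\beta_{w_t,u_t}\Bigr)\, R_{u_1}\cdots R_{u_d}.
\]
Thus the two polynomials have the same coefficients indexed by $u \in [k]^d$, attached respectively to the monomials $y_{u_1}\cdots y_{u_d}$ and to the products $R_{u_1}\cdots R_{u_d}$. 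The monomials $y_{u_1}\cdots y_{u_d}$ are automatically $\F$-linearly independent in $\F\angle{y_1,\ldots,y_k}$, so the $y$-side vanishes iff every bracketed coefficient is zero.

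The main obstacle, and the one nontrivial point of the proof, is to show that the products $\{R_{u_1}\cdots R_{u_d} : u \in [k]^d\}$ are themselves $\F$-linearly independent in $\F\angle X$; granted this, the vanishing of $T(R_1,\ldots,R_s)$ forces the same bracketed coefficients to be zero, and both directions follow. I would establish this independence using the tensor-algebra structure of the free algebra: the homogeneous component $V_{d'} \subset \F\angle X$ of degree $d'$ satisfies $V_{d'}^{\otimes d} \cong V_{dd'}$ via the concatenation map $f_1 \otimes \cdots \otimes f_d \mapsto f_1 f_2 \cdots f_d$ (this is where it is essential that the $R_i$ are all homogeneous of the \emph{same} degree $d'$, since otherwise cross-degree cancellations could occur). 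Extend $R_1,\ldots,R_k$ to a basis $R_1,\ldots,R_N$ of $V_{d'}$; then under the above isomorphism the pure tensors $R_{v_1}\otimes\cdots\otimes R_{v_d}$ with $v \in [N]^d$ form a basis of $V_{d'}^{\otimes d}$, so their images $R_{v_1}\cdots R_{v_d}$ form a basis of $V_{dd'}$, and the subfamily indexed by $u \in [k]^d$ is in particular linearly independent. Combining this with the coefficient-matching above yields the claimed equivalence.
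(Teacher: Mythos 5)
Your proposal is correct, but it takes a genuinely different route from the paper. The paper's forward direction works by projection and change of basis: since $R_1,\ldots,R_k$ are independent, one picks degree-$d'$ monomials $m_1,\ldots,m_k$ whose $k\times k$ coefficient matrix $B$ is full rank, replaces each $R_j$ by its truncation $R'_j$ to these monomials (observing that the $\{m_1,\ldots,m_k\}^d$-monomials of $T(R_1,\ldots,R_s)$ survive with unchanged coefficients, by unique block factorization of words), renames $m_i\mapsto y_i$ to get $T(\ell'_1,\ldots,\ell'_s)\equiv 0$, and finally applies the invertible map $B^{-1}$ position-by-position (Proposition~\ref{invert}) to recover $T(y_1,\ldots,y_k,\ell_{k+1},\ldots,\ell_s)\equiv 0$. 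You instead isolate and prove the cleaner structural fact that the products $\{R_{u_1}\cdots R_{u_d}: u\in[k]^d\}$ are $\F$-linearly independent, via the concatenation isomorphism $V_{d'}^{\otimes d}\cong V_{dd'}$ of the tensor (free) algebra, after which both sides of the equivalence reduce to the vanishing of the same family of coefficients $\sum_w c_w\prod_t \beta_{w_t,u_t}$; your uniform $\beta$ notation even subsumes the easy direction. Both arguments ultimately rest on the same combinatorial kernel, namely that a word of length $dd'$ factors uniquely into $d$ blocks of length $d'$ (this is where homogeneity of equal degree is essential, as you note), but your version buys a shorter, more conceptual proof with a reusable independence lemma and no need for Proposition~\ref{invert}, whereas the paper's version stays entirely within elementary coefficient manipulations and the positionwise-transformation tool it has already set up.
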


\begin{proof}
The reverse implication is immediate. For, suppose
$T(y_1,y_2,\ldots,y_k,\ell_{k+1},\ldots,\ell_s)\equiv 0$.  Then, by
substituting $R_i$ for $y_i, 1\le i\le k$ we obtain
$T(R_1,R_2,\ldots,R_s)\equiv 0$.

We now show the forward implication. Suppose
$T(R_1,R_2,\ldots,R_s)\equiv 0$. As $R_1,R_2,\ldots,R_k$ are linearly
independent over $\F$ we can find degree-$d'$ monomials
$m_1,m_2,\ldots,m_k$ such that the $k\times k$ matrix $B$ of their
coefficients is of full rank. More precisely, if $\beta_{ji}$ is
the coefficient of $m_i$ in $R_j$ then the matrix
\[
B=\left(\beta_{ji}\right)_{1\le j,i\le k}
\]
is full rank.

Define polynomials 

\begin{align}
R'_j &=\sum_{i=1}^k\beta_{ji}m_i, 1\le j\le k\\
R'_j &=\sum_{i=1}^k\alpha_{ji}R'_i, k+1\le j\le s.
\end{align}

Notice that $T(R_1,R_2,\ldots,R_s)\equiv 0$ implies
$T(R'_1,R'_2,\ldots,R'_s)\equiv 0$. This is because every nonzero
monomial occurring in $T(R'_1,R'_2,\ldots,R'_s)$ precisely consists of
all monomials from the set $\{m_1,m_2,\ldots,m_k\}^d$ occurring in
$T(R_1,R_2,\ldots,R_s)$ (with the same coefficient).

Replacing $m_i$ by variable $y_i, 1\le i\le k$ transforms
each $R'_j$ to linear forms
\[
\ell'_j= \sum_{i=1}^k\beta_{ji}y_i,~\textrm{ for } 1\le j\le k,
\]
and 
\[
\ell'_j= \sum_{i=1}^k\alpha_{ji}\sum_{q=1}^k\beta_{iq}y_q, \textrm{
  for } k+1\le j\le s.
\]

Note that the coefficient of any monomial $y_{i_1}y_{i_2}\ldots
y_{i_d}$ in
$T(\ell'_1,\ell'_2,\ldots,\ell'_k,\ell'_{k+1},\ldots,\ell'_s)$ is same as the coefficient of the corresponding monomial
$m_{i_1}m_{i_2}\ldots m_{i_d}$ in $T(R'_1,R'_2,\ldots,R'_s)$ which is
zero. Hence
$T(\ell'_1,\ell'_2,\ldots,\ell'_k,\ell'_{k+1},\ldots,\ell'_s)\equiv
0$. Now, since $B$ is invertible, we can apply the linear map $B^{-1}$
to each of the $d$ positions in the polynomial
$T(\ell'_1,\ell'_2,\ldots,\ell'_k,\ell'_{k+1},\ldots,\ell'_s)$ and
obtain $T(y_1,y_2,\ldots,y_k,\ell_{k+1},\ldots,\ell_s)$, which must be
identically zero by Proposition~\ref{invert}. This completes the proof.
\end{proof}

Now, suppose $C$ is a $+$-regular circuit of size $s$ of syntactic
degree $D$ computing a polynomial in $\F\angle X$, where
$X=\{x_1,x_2,\ldots,x_n\}$. Suppose there are $d$ layers of $+$-gates
in $C$, where we number the $+$-gate layers from bottom upward. Thus,
the $+$-gates in layer $1$ compute homogeneous linear forms in
$X$. Let $g_1,g_2,\ldots,g_m$ be the inputs to the layer $2$
$+$-gates. In other words, $g_1,g_2,\ldots,g_m$ are the output of the
$\times$ gates just below the layer $2$ $+$-gates.  Let $C'$ be the
circuit obtained from $C$ by deleting all gates below
$g_1,g_2,\ldots,g_m$, and replacing $g_1,g_2,\ldots,g_m$ by input
variables $y_1,y_2,\ldots,y_m$, respectively. Let
$T(y_1,y_2,\ldots,y_m)$ be the homogeneous degree $D'$ polynomial computed by $C'$. In the circuit $C$ suppose
$P_1,P_2,\ldots,P_m$ are the polynomials computed by the gates
$g_1,g_2,\ldots,g_m$, respectively. As $C$ is homogeneous, each $P_i$
is homogeneous of syntactic degree $D/D'$ (which means either $P_i$ is
identically zero or homogeneous degree $D''=D/D'$).

Notice that we can apply Theorem~\ref{subst} to the polynomials $T$
and $P_1,P_2,\ldots,P_m$, and immediately obtain the following.

\begin{lemma}\label{regpit}
Suppose, without loss of generality, that $P_1,P_2\ldots,P_t$ is a
maximal $\F$-linearly independent subset of $P_1,P_2,\ldots,P_m$,
and 
\[
P_j=\sum_{i=1}^t \alpha_{ji}P_i,~~t+1\le j\le m.
\]
Then $T(P_1,P_2,\ldots,P_m)\equiv 0$ if and only if 
$T(y_1,y_2,\ldots,y_t,\sum_{i=1}^t \alpha_{t+1 i} y_i,\ldots,\sum_{i=1}^t 
\alpha_{m i} y_i)\equiv 0$.

I.e.\ the circuit $C$ is identically zero if and only if the
circuit $C'(y_1,y_2,\ldots,y_t,\sum_{i=1}^t \alpha_{t+1 i} y_i,\ldots,\sum_{i=1}^t 
\alpha_{m i} y_i)\equiv 0$.
\end{lemma}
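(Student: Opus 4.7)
The plan is to derive this lemma as an immediate application of Theorem~\ref{subst} to the polynomial $T$ in the $y_i$'s and the polynomials $P_1, \ldots, P_m$ that were cut out from $C$. The work is essentially bookkeeping: I need to check that the $P_i$'s and $T$ satisfy the homogeneity hypothesis of Theorem~\ref{subst}, and then invoke the theorem.

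First, I would verify the structural hypotheses. The polynomial $T(y_1,\ldots,y_m)$ is produced by taking $C$ and cutting away everything strictly below the layer-2 product gates; what remains is homogeneous of some degree $D'$ when each $y_i$ is viewed as a degree-$1$ fresh noncommuting variable. So $T$ plays the role of the degree-$d$ polynomial in $z_1,\ldots,z_s$ in Theorem~\ref{subst} with $d = D'$ and $s = m$. For the $P_i$'s: each $g_i$ feeds into some layer-2 $+$-gate. By the definition of $+$-regularity, all layer-2 $+$-gates have a common syntactic degree, and since the circuit is homogeneous every input to a $+$-gate must match its syntactic degree; hence each $P_i$ is either identically zero or homogeneous of a common syntactic degree $D''= D/D'$. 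Thus the $P_i$'s play the role of the degree-$d'$ polynomials $R_i$ in Theorem~\ref{subst} with $d' = D''$ (if some $P_i \equiv 0$ we simply drop it from the list, as the zero polynomial is linearly dependent and contributes nothing to either side of the claimed equivalence).

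Next, the maximal linearly independent subset $\{P_1, \ldots, P_t\}$ and the expressions $P_j = \sum_{i=1}^t \alpha_{ji} P_i$ for $t+1 \le j \le m$ are given exactly in the form required as input to Theorem~\ref{subst}. Applying that theorem verbatim yields
\[
T(P_1,P_2,\ldots,P_m) \equiv 0 \iff T\!\left(y_1,\ldots,y_t,\sum_{i=1}^t \alpha_{t+1,i}\,y_i,\ldots,\sum_{i=1}^t \alpha_{m,i}\,y_i\right) \equiv 0,
\]
which is the first part of the lemma.

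Finally, for the "i.e." restatement, I would observe that by the very construction of $C'$, the polynomial computed by $C$ is exactly $T(P_1, P_2, \ldots, P_m)$: reinserting, in the circuit $C'$, the subcircuit rooted at $g_i$ (which by definition computes $P_i$) for each fresh input variable $y_i$ reproduces $C$. Hence $C \equiv 0$ iff $T(P_1,\ldots,P_m) \equiv 0$, and the equivalence just established translates directly into the statement about the circuit $C'$ evaluated at the linear forms in $y_1,\ldots,y_t$. There is no real obstacle here: the lemma is essentially a repackaging of Theorem~\ref{subst} for the specific $T$ and $P_1,\ldots,P_m$ arising from the layer-2 cut of a $+$-regular circuit, and the only non-trivial ingredient to check is homogeneity of the $P_i$'s, which follows directly from the $+$-regularity and homogeneity of $C$.
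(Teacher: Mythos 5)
Your proposal is correct and follows essentially the same route as the paper, which obtains the lemma as an immediate application of Theorem~\ref{subst} to $T$ and $P_1,\ldots,P_m$ arising from the layer-2 cut; your additional checks (homogeneity of the $P_i$ from $+$-regularity, and the identification of $C$ with $T(P_1,\ldots,P_m)$) are exactly the bookkeeping the paper leaves implicit. The only small remark is that a $P_i\equiv 0$ need not be dropped: it is simply a dependent polynomial with all coefficients $\alpha_{ji}=0$, so Theorem~\ref{subst} applies to the full list as stated.
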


Clearly, Lemma~\ref{regpit} will yield a deterministic polynomial-time
identity test for regular circuits, if we can solve the following
problem in deterministic polynomial time:

Given a list of noncommutative polynomials
$P_1,P_2,\ldots,P_m\in\F\angle X$, where each $P_i$ is given by a
$\Pi^*\Sigma$ circuit, find a maximal linearly independent subset $A$
of the polynomials $P_i, 1\le i\le m$ and express the others as linear
combinations of the $P_i$ in $A$.

The PIT for $+$-regular circuits would follow because we can
repeat the same argument as above with $C'$. Finally, we will be left
with verifying if the sum of linear forms (in at most $s$ variables,
say $z_i, 1\le i\le s$) vanishes.

\subsection{Linear independence testing of $\Pi^*\Sigma$ circuits}

In this subsection we solve the above mentioned linear independence
testing problem. Namely, we prove the following theorem.

\begin{theorem}
 Given as input $\Pi^*\Sigma$ circuits computing noncommutative
 polynomials $P_1,P_2,\ldots,P_m\in\F\angle X$, there is a
 deterministic polynomial-time algorithm that will find a maximal
 linearly independent subset $A$ of the polynomials $P_i, 1\le i\le
 m$, and also express the others as $\F$-linear combinations of the
 $P_i$ in $A$.
\end{theorem}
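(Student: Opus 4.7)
The plan is to reduce the problem from products of exponential degree to products of polynomial degree, where noncommutative ABP techniques apply. After grouping the $P_i$ by syntactic degree (polynomials of distinct degrees are automatically linearly independent), I can assume all $P_i$ have the same degree $d$ (possibly $2^s$). A $\Pi^{\ast}\Sigma$ circuit of size $s$ produces at most $s$ distinct linear forms at its bottom $\Sigma$ layer, so the whole list $P_1,\ldots,P_m$ uses a pool $\mathcal{L}=\{L_1,\ldots,L_r\}$ with $r=O(sm)$, and each $P_i$ is encoded as a word $w_i\in\mathcal{L}^d$ whose structure is given by a straight-line program of size $O(s)$.

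Next, I would define trimmed polynomials $\tilde P_i$ by keeping only positions labeled by ``rare'' letters. Fix a polynomial threshold $T$ (roughly $T=m^5$, dictated by the rank bound below). Call $L\in\mathcal{L}$ \emph{frequent} if its total number of occurrences across $w_1,\ldots,w_m$ exceeds $T$, and \emph{rare} otherwise. Let $\tilde w_i$ be the subword of $w_i$ obtained by deleting all positions labeled by frequent letters, and let $\tilde P_i$ be the corresponding product in $\F\angle{X}$. Using compressed-word algorithms for SLPs (Plandowski \cite{Pla94}, Mehlhorn--Sudborough--Uhrig \cite{MSU97}, Lohrey \cite{Loh15}), I can in $\poly(s,m)$ time count the occurrences of each $L\in\mathcal{L}$ in each $w_i$, classify letters as frequent or rare, and list explicitly all rare positions of every $w_i$. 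Since the total number of rare positions is at most $rT=\poly(s,m)$, each $\tilde P_i$ has polynomial degree.

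The crux is the equivalence: $P_1,\ldots,P_m$ are $\F$-linearly independent if and only if $\tilde P_1,\ldots,\tilde P_m$ are. The easy direction is a specialization: substituting the constant $1$ for every frequent $L$ sends $P_i$ to a scalar multiple of $\tilde P_i$, so any dependence among the $\tilde P_i$ lifts to one among the $P_i$. For the converse, I would take a minimal dependence $\sum_{i\in S}\alpha_i P_i=0$ and, via Claim~\ref{set-mult}, view it set-multilinearly in the coefficient variables of the $L_j$'s: it becomes a commutative $\Sigma\Pi\Sigma$ identity of top fan-in $|S|\le m$. The rank bound of Saxena--Seshadhri \cite{SS13} then forces the rank of the linear forms participating in a minimal such identity to be $\poly(m)$; with threshold $T=m^5$, a position-wise analysis shows that any frequent letter $L$ occurring in too many positions is ``forced'' by the rare-position data, so the identity must already hold when restricted to $\tilde P_i$'s, giving a dependence there and the reverse implication.

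Finally, with $\tilde P_1,\ldots,\tilde P_m$ of polynomial degree, each admits a noncommutative ABP of size $\poly(s,m)$, and the Raz--Shpilka deterministic algorithm \cite{raz05PIT} tests linear independence of ABP-computed noncommutative polynomials in deterministic polynomial time; a standard incremental application yields a maximal $\F$-linearly independent subset $A$ together with the coefficients expressing every remaining $\tilde P_j$ as a combination of elements of $A$, and these same coefficients express $P_j$ in terms of $A$ by the equivalence. The hard part will be the third step: pinning down the exact formulation under which the commutative depth-three rank bound acts position-wise on the noncommutative product, and choosing $T$ so that the cancellation among frequent letters in any minimal identity is indeed determined by the rare positions.
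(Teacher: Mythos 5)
Your overall architecture (compressed-word algorithms for SLP-encoded words, the Saxena--Seshadhri rank bound, Raz--Shpilka at the end) matches the paper, but the reduction to low degree is done differently, and the step you yourself flag as unresolved is exactly where the argument breaks. The paper never trims by global letter frequency: it only drops a position when $P_i$ and all the candidate $P_j$'s carry the \emph{identical} linear form at that position, which is what makes deletion dependence-preserving (it amounts to cancelling a common set-multilinear factor, cf.\ Claim~\ref{set-mult}). Your frequency-based trimming deletes positions at which different words may carry different letters, and then your claimed equivalence fails. Concretely, take $m=2$, $P_1=x_1^{N}x_2$ and $P_2=x_2x_1^{N}$ with $N$ large: $x_1$ is frequent, $x_2$ is rare, so $\tilde P_1=\tilde P_2=x_2$ are linearly dependent while $P_1,P_2$ are independent; a maximal independent subset of the $\tilde P_i$ together with its coefficients therefore says nothing about the $P_i$, so the last step of your plan collapses. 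This example also refutes your ``easy direction'' as stated: a substitution homomorphism (where it even exists --- note that setting a general linear form, or several mutually inconsistent frequent forms, to $1$ is not a well-defined substitution, and at a frequent position other $P_j$'s may carry rare letters that get destroyed too) maps dependencies of the $P_i$ \emph{forward} to dependencies of the $\tilde P_i$; it does not lift dependencies of the $\tilde P_i$ back.

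The missing idea is to apply the rank bound \emph{relative to each $P_i$ and a minimal dependency}, not to letter frequencies. After removing scalar multiples via compressed-word equality testing, the paper argues: if $P_i=\sum_{j\in S}\gamma_jP_j$ with $S$ minimal, set-multilinearize, divide out the gcd, and apply the Saxena--Seshadhri bound to the resulting simple and minimal set-multilinear identity; since set-multilinearity forces the rank to be at least the number $\rho$ of positions where $P_i$ and any $P_j$, $j\in S$, disagree, one gets $\rho=O(\ell^{2}\log D)$, hence $\rho\le c\ell^{4}$. So the only admissible candidates are the $P_j$ disagreeing with $P_i$ in at most $c\ell^{4}$ positions; these candidates and the at most $c\ell^{5}$ disagreement positions $T$ are computed with the algorithms of Plandowski, Mehlhorn et al.\ and Lohrey, the linear forms outside $T$ (identical across $P_i$ and all candidates) are dropped soundly, and the Raz--Shpilka procedure is run on the resulting polynomially bounded-degree polynomials, iterating over $i$ to build the maximal independent set and the expressing coefficients. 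Your proposal needs this $P_i$-relative, agreement-based choice of positions; as written, the frequency-based equivalence in your third step is false, not merely unproven.
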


\begin{proof}
Let $L_1,L_2,\ldots,L_t$ be the set of all linear forms (in variables
$x_1,x_2,\ldots,x_n$) defined by the bottom $\Sigma$ layers of the
given $\Pi^*\Sigma$ circuits computing polynomials $P_i, 1\le i\le m$.

Without loss of generality, let $L_1,L_2,\ldots,L_r$ be a maximal set
of linear forms among $L_1,L_2,\ldots,L_t$ that are not scalar
multiples of each other. Thus, for each $L_i, i>r$, there is some
$L_j, j\le r$ such that $L_i$ is a scalar multiple of
$L_j$. Therefore, we can express each $P_i$ as a product of linear
forms from $L_1,\ldots,L_r$, upto a scalar multiple:
\[
P_i=\alpha_i L_{i1}L_{i2}\ldots L_{iD},~~1\le i\le m
\]

Corresponding to the linear forms $L_1,L_2,\ldots,L_r$ define an
alphabet $\{a_1,a_2,\ldots,a_r\}$ of $r$ letters, where $a_i$ stands
for $L_i, 1\le i\le r$. 

Let $s$ be the bound on the sizes of the given $\Pi^*\Sigma$ circuits
computing polynomials $P_i, 1\le i\le m$.

For each $i$, we can transform the $\Pi^*\Sigma$ circuit computing
$P_i$ into a multiplicative circuit $C_i$ of size $s$ computing a word
$w_i$ of length $D$ in $\{a_1,a_2,\ldots,a_r\}^D$ as follows: replace
linear form $L_j$ in the $\Pi^*\Sigma$ circuit by letter $a_k$ if
$L_j$ is a scalar multiple of $L_k$.

The following claim is immediate.

\begin{claim}
Polynomials $P_i$ and $P_j$ are scalar multiples of each
other if and only if $w_i=w_j$.
\end{claim}

At this point we recall the following results which are implicit in 
\cite{Loh15, Pla94, MSU97} about words over
a finite alphabet, where the words are given as input by
multiplicative circuits (where multiplication is concatenation of
words).

\begin{itemize}
\item There is a deterministic polynomial time algorithm that
  takes as input two multiplicative circuits $C_i$ and $C_j$ over a
  finite alphabet and tests if the words computed by them are
  identical. If not the algorithm returns the leftmost index $k$
where the two words differ.

\item Given a word $w$ by a multiplicative circuit $C$ over some
  finite alphabet, the following tasks can be done in deterministic
  polynomial time: computing the length $|w|$ of $w$, given index $k$
  computing the $k^{th}$ letter $w[k]$, circuits $C'$ and $C''$ that
  compute the prefix $w[1\ldots k]$ and $w[k+1\ldots |w|]$ determined
  by any given position $k$, circuit $C_{k,k'}$ for the subword
  $w[k\ldots k']$ for given positions $k$ and $k'$. In particular,
  this implies that the circuit $C_{k,k'}$ is of size polynomial in
  the sizes of $C, k$ and $k'$. The parameters $k,k'$ are given in binary.
\end{itemize}

Thus, given $C_i$ and $C_j$ corresponding to polynomials $P_i$ and
$P_j$, we can find if $P_i$ and $P_j$ are scalar multiples of each
other in deterministic polynomial time.

Without loss of generality, let $P_1,P_2,\ldots,P_\ell$ be the
polynomials that are \emph{not} scalar multiples of each other.  Our
aim is to determine a maximal linearly independent subset $A$ of these
polynomials, and express each of the remaining polynomials as a linear
combination of polynomials in $A$.

Our algorithm will require a rank bound due to Saxena and Seshadri
\cite{SS13}. We recall their result first.  Consider a
$\Sigma\Pi\Sigma$ arithmetic circuit $C'$, where the top $\Sigma$ gate
has fanin $k$, all $\Pi$ gates are of fanin $D$, and each $\Pi$ gate
computes a product $Q_i, 1\le i\le k$ of homogeneous $\F$-linear forms
in commuting variables $y_1,y_2,\ldots,y_n$. Circuit $C'$ is said to
be \emph{simple} if the $gcd(Q_1,Q_2,\ldots,Q_k)=1$. Circuit $C'$ is
said to be \emph{minimal} if for any proper subset $S\subset [k]$ the
sum $\sum_{i\in S} Q_i\ne 0$.

If the polynomial computed by a simple and minimal circuit $C'$ is
identically zero then it is shown in \cite{SS13} that the rank of the
set of all $\F$-linear forms occurring at the bottom $\Sigma$ layer of
the circuit is bounded by $O(k^2\log D)$.

In order to apply this rank bound in our setting, we make the
polynomials $P_i, 1\le i\le \ell$ set-multilinear in variables
$\{x_{im}\mid 1\le i\le n, 1\le m\le D\}$ as follows: corresponding to
each linear form $L_j, 1\le j\le r$ we define linear forms $L'_{jm},
1\le m\le D$, where $L'_{jm}$ is obtained from $L_j$ by replacing
variable $x_i$ with variable $x_{im}$, for $1\le i\le n$ and $1\le
m\le D$. Likewise, we obtain the set-multilinear polynomial
$\hat{P}_i$ from $P_i$ by replacing the $m^{th}$ linear form, say
$L_j$, with $L'_{jm}$, for $1\le m\le D$.\footnote{The conversion to
  the set-multilinear polynomial is only for the sake of analysis and
  not for the actual algorithm.} The following claim is immediate.

\begin{claim}
A linear form $L'$ divides the gcd of a subset $S$ of the polynomials
$\hat{P}_i, 1\le i\le \ell$, if and only if $L'=L'_{jm}$ for some $j$
and $m$, and $L_j$ occurs in the $m^{th}$ position of each product
$P_i\in S$.
\end{claim}

The next claim includes the main step of the algorithm.

\begin{claim}
  For $i\le \ell$, we can test in deterministic polynomial time if
  $P_i$ can be expressed as an $\F$-linear combination of
  $P_1,P_2,\ldots,P_{i-1}$.
\end{claim}

\noindent\textit{Proof of Claim.}~~ Suppose $P_i$ is expressible as
an $\F$-linear combination of $P_1,P_2,\ldots,P_{i-1}$. Let $S\subseteq [i-1]$
be a minimal subset such that we can write

\[
P_i=\sum_{j\in S}\gamma_j P_j,~~\gamma_j\ne 0~~\textrm{ for all }j.
\]

Now, consider the set-multilinear circuit $C'$ defined by the sum of
products
\[
\hat{P}_i - \sum_{j\in S}\gamma_j \hat{P}_j.
\]

By minimality of subset $S$, circuit $C'$ is minimal. Suppose for some
$j\in S$, $P_i$ and $P_j$ disagree on $\rho$ positions. I.e.\ for $\rho$ positions $m$,
the linear forms occurring in the $m^{th}$ position in $P_i$ and $P_j$
are different. Let the gcd of the polynomials in the set
$\{\hat{P}_j\mid j\in S\} \cup \{\hat{P}_i\}$ be $P$, and let
$\deg(P)=\delta$. By the previous claim it follows that
\[
\delta\le D-\rho.
\]
Define polynomials $Q_i=\hat{P}_i/P$ and $Q_j=\hat{P}_j/P, j\in
S$. Notice that each $Q_i$ is a product of $D-\delta$ linear forms.
Furthermore, $\hat{P}_i - \sum_{j\in S}\gamma_j \hat{P}_j=P(Q_i -
\sum_{j\in S}\gamma_j Q_j)$. Consider the simple and minimal circuit
$C''$ defined by the sum
\[
Q_i - \sum_{j\in S}\gamma_j Q_j.
\]
Clearly, $C'$ is zero iff $C''$ is zero.  Since $C''$ is a
set-multilinear circuit, the rank of the set of all $\F$-linear forms
is at least $D-\delta$.  If $C''\equiv 0$ then by the \cite{SS13} rank
bound we have $\rho\le D-\delta\le O(\ell^2\log (D-\delta))$.  Hence,
it follows from the inequality $\log_2 x\leq x^{1/2}$ that $\rho\leq
c\cdot \ell^4$, for some constant $c>0$. Thus, for each $j\in S$,
$P_i$ and $P_j$ can disagree on at most $c\cdot\ell^4$ positions.

Therefore, the candidate polynomials $\hat{P}_j, j\le i-1$ in the
linear combination for expressing $\hat{P}_i$ are from only those
$P_j$ that disagree with $P_i$ in at most $c\cdot\ell^4$ positions. Using
the algorithms from \cite{Loh15, Pla94, MSU97} stated above, we can use the
multiplicative circuits $C_i$ and $C_j$ and determine if there are at
most $c\cdot\ell^4$ positions where the corresponding words differ. We can
also compute the at most $c\cdot\ell^4$ many indices where the words differ.
Let $S'\subseteq [i-1]$ be the set of all such indices $j$. Our goal
is to efficiently determine if $\hat{P}_i$ is an $\F$-linear
combination of the $\hat{P}_j, j\in S'$.

Let $T\subset [D]$ be the set of all positions where $P_i$ differs
from some $P_j, j\in S'$. Then $|T|\le c\cdot\ell^5$, and the polynomial
$P_i$ and all $P_j, j\in S'$ have identical linear forms in the
remaining $[D]\setminus T$ positions. Using the claim \ref{set-mult},
it is easy to see that for determining linear dependence, we can drop
the linear forms occurring in the positions in $[D]\setminus T$. Thus,
we can replace $P_i$ and each $P_j, j\in S'$ with polynomials $P'_i$
and $P'_j, j\in S'$ obtained by retaining only those linear forms
occurring in positions in $T$. \footnote{Notice that using Claim
  \ref{set-mult}, $P_i=\sum_{j\in S'} \gamma_j P_j \Leftrightarrow
  \hat{P}_i=\sum_{j\in S'} \gamma_j \hat{P}_j \Leftrightarrow Q_i =
  \sum_{j\in S'}\gamma_j Q_j \Leftrightarrow P'_i = \sum_{j\in S'}
  \gamma_j P'_j$.} We can determine these linear forms for each $P_i$
from the multiplicative circuit $C_i$ in deterministic polynomial time
using the results in \cite{Loh15, Pla94}.

Clearly each $P'_j, j\in S'$ as well as $P'_i$ is computable by a
$\Sigma\Pi\Sigma$ noncommutative circuit of size at most $O(n\ell^5)$. In
particular, these polynomials are all computable by $\poly(\ell,n)$
size noncommutative algebraic branching programs (noncommutative
ABP). Now, we will apply the main idea from the Raz-Shpilka
deterministic polynomial identity test \cite{raz05PIT} to determine if
$P'_i$ is a linear combination of the $P'_j, j\in S'$.

We explain concisely how to adapt the Raz-Shpilka algorithm. Let $B_i$
and $B_j, j\in S'$ be the ABPs computing $P'_i$ and $P'_j, j\in S'$,
respectively. Following \cite{raz05PIT} we process all the ABPs
simultaneously, layer by layer. At the $q^{th}$ layer, we maintain a
list of degree-$q$ monomials $m_{1q},m_{2q},\ldots,m_{pq}$, along with
their coefficient matrix $C{(q)}$: The $j^{th}$ columns of this matrix
gives the vector of coefficients of the monomials
$m_{1q},m_{2q},\ldots,m_{pq}$ in the polynomial computed in the
$j^{th}$ in layer $q$. Furthermore, for any other degree $q$ monomial
$m$, the coefficient vector of its coefficients at the nodes in layer
$q$ is a linear combination of the rows of $C{(q)}$. Given this data
for the $q^{th}$ layer, it is shown in \cite{raz05PIT} how to efficiently
compute the monomials and coefficient matrix for layer
$q+1$. Continuing thus, when we reach the last layer containing the
output gates of $B_i$ and $B_j, j\in S'$, we will have monomials
$m_1,m_2,\ldots,m_{\ell'}$ and corresponding ${\ell'}\times \ell$
coefficient matrix $C$ which has complete information about all linear
dependencies between the polynomials $P'_i$ and $P'_j,j\in S'$. In
particular, $P'_i=\sum_{j\in S'}\beta_jP'_j$ if and only if
$C_1=\sum_{j\in S'}\beta_j C_j$, where $C_1$ is the column of
coefficients in $P'_i$ and $C_j$ are the columns corresponding to the
$P'_j, j\in S'$, which can be determined efficiently using Gaussian
elimination. This completes the proof of this claim. \qed

To conclude the overall proof we note that the above claim can be
applied to determine the leftmost maximal linearly independent subset
$A$ of the input polynomials $P_1,\ldots, P_m$ and also express the
others as linear combinations of polynomials in $A$.
\end{proof}

\section{Black-box randomized PIT for homogeneous $\Sigma\Pi^*\Sigma$}\label{black-box-depth-three}

As shown in the previous section, we can test if a given homogeneous
$\Sigma\Pi^*\Sigma$ circuit (white-box) is identically zero in
deterministic polynomial time (as $\Sigma\Pi^*\Sigma$ circuits are
$+$-regular).

However, suppose we have only black-box access to a
$\Sigma\Pi^*\Sigma$ circuit $C$ computing a polynomial in $\F\angle
X$. I.e.\ we can evaluate $C$ on square matrices $M_i$
substituted for $x_i, 1\le i\le n$, where the cost of an evaluation is
the dimension of the $M_i$. Then it is not clear how to apply the
observations of the previous section. Specifically, $C$ may compute a
nonzero exponential degree noncommutative polynomial, but it is not
clear if we can test that by evaluating $C$ on matrices of polynomial
dimension. Also, the black-box PIT result of \cite{AMR16} cannot
be applied here since $C$ can compute polynomials of
double-exponential sparsity.

Nevertheless, we show in this section that if $C$ is an $s$-sum
$P_1+P_2+\cdots + P_s$ of $D$-products of linear forms in variables
$X$. I.e.\
\[
P_i=L_{i1}L_{i2}\ldots L_{iD},
\]
where $D$ is exponentially large then we can do black-box PIT for $C$
by evaluating it on random $O(s)\times O(s)$ matrices with entries
from $\F$ or a suitably large extension of $\F$. 
It also clearly follows from our argument that we do not need the top $+$ gate to 
be homogeneous. We only need the polynomials $P_i$ to be product of homogeneous linear 
forms. But for notational simplicity we continue writing each $P_i$ as a product of $D$ linear 
forms. 
The proof of this
claim is based on the notion of projected polynomials defined below which also shows that the 
homogeneous parts of different degrees can not participate in cancellation of terms.

\subsection{Projected Polynomials}\label{bucketing}

\begin{definition}
Let $P\in\F\angle X$ be a homogeneous degree-$D$ polynomial.  For an
index set $I\subseteq [D]$ the \emph{$I$-projection} of polynomial $P$
is the polynomial $P_I$ which is defined by letting all variables
occurring in positions indexed by the set $I$ as noncommuting. In all
other positions we make the variables commuting, by renaming $x_i$ by
the commuting variable $z_i$ for $1\leq i\leq n$. Thus, the
$I$-projected polynomial $P_I$ is in $\F[Z]\angle X$, and the
(noncommutative) degree of $P_I$ is just $I$.
\end{definition}

\begin{lemma}\label{isolating-set}
Let $P_1, P_2, \ldots, P_s \in\F\angle X$ each be a product of $D$
homogeneous linear forms
\[
P_i = L_{i,1} L_{i,2} \ldots L_{i,D},
\]
where $\{L_{i,j} : 1\leq i\leq s, 1\leq j\leq D\}$ are linear forms in
$\F\angle X$. Then there exists a subset $I\subseteq [D]$ of size at
most $s-1$ such that for any nonzero scalars
$\beta_1,\beta_2,\ldots,\beta_s\in\F\setminus\{0\}$ we have
\[
\sum_{i=1}^s \beta_i P_i=0 \textrm { iff } \sum_{i=1}^s \beta_i P_{i,I}=0,
\]
where $P_{i,I}$ is the $I$-projection of the polynomial $P_i$. 
\end{lemma}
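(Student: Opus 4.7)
The plan is to build the set $I$ incrementally while exploiting the clean factorization of the $I$-projection. Since variables $z_i$ commute with $x_i$ in $\F[Z]\angle X$, we have
$$P_{i,I} = \tilde{Q}_i^I \cdot \hat{P}_i^I,$$
where $\tilde{Q}_i^I = \prod_{j \notin I}\tilde{L}_{i,j} \in \F[Z]$ is nonzero (since $P_i \neq 0$ forces every $L_{i,j}\neq 0$ in the domain $\F\angle X$), and $\hat{P}_i^I$ is the ordered noncommutative product of $L_{i,j}$ for $j \in I$. Define $V_I = \{\beta \in \F^s : \sum_i \beta_i P_{i,I} = 0\}$; this is monotone decreasing in $I$, with $V := V_{[D]} \subseteq V_I$. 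The lemma demands $|I|\le s-1$ with $V_I \cap (\F^*)^s = V \cap (\F^*)^s$, and the forward direction is automatic from linearity of $\pi_I$.

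I would proceed by induction on $s$. The base $s=1$ is immediate since a product of linear forms vanishes iff some factor does, which is invariant under commutativization. For the inductive step, if $P_1,\ldots,P_s$ are linearly dependent, say $P_s = \sum_{i<s}c_i P_i$, then linearity of $\pi_I$ gives $\sum_i\beta_i P_{i,I} = \sum_{i<s}(\beta_i+\beta_s c_i)P_{i,I}$, so applying the inductive hypothesis to $P_1,\ldots,P_{s-1}$ (which yields an $I$ of size $\le s-2$) handles the dependent case. The main content is the linearly independent case, where $V = 0$ and one aims for $V_I = 0$: it suffices to find $I$ with $|I|\le s-1$ such that $\hat{P}_1^I,\ldots,\hat{P}_s^I$ are $\F$-linearly independent in $\F\angle X$. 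Combined with the nonzero commutative factors $\tilde{Q}_i^I$ and the tensor structure $\F[Z]\angle X \cong \F[Z]\otimes_\F\F\angle X$, this forces $\pi_I$ to be injective on $\mathrm{span}(P_1,\ldots,P_s)$, hence $V_I = 0 = V$. The construction is iterative: starting from $I=\emptyset$ (where $\hat{P}_i^\emptyset = 1$, rank $1$), at each step add a position $j \in [D]\setminus I$ that strictly increases $\mathrm{rank}_\F\{\hat{P}_i^I\}$; since the rank must rise from $1$ to $s$, at most $s-1$ additions are needed.

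The main obstacle is proving that at each iteration a rank-increasing position actually exists. A naive hope is that whenever $\sum_i\beta_i\hat{P}_i^I = 0$ is a nontrivial dependency, some $j \in [D]\setminus I$ makes $\sum_i\beta_i\hat{P}_i^{I\cup\{j\}} \neq 0$, i.e., inserting $L_{i,j}$ into the ordered product breaks the relation. However, examples such as $R=(xy-yx)^2$, realized as $\sum_i\beta_i P_i$ over the four monomials $xyxy, xyyx, yxxy, yxyx$ with $\beta=(1,-1,-1,1)$, show that a given dependency can be invisible to every single-position projection while already being broken at two positions. Hence the argument cannot track one fixed witness $\beta$; it must work at the level of the whole dependency space. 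I would couple the bound $\dim V_\emptyset \le s-1$ (guaranteed because the commutative projections $\tilde{P}_i$ span at least a one-dimensional subspace of $\F[Z]$) with the fact that all dependencies disappear at $I = [D]$ (by linear independence of the original $P_i$'s), and argue—using a careful combinatorial/structural analysis of products of linear forms—that across the iterations the overall rank of $\{\hat{P}_i^I\}$ must strictly rise often enough to reach $s$ in no more than $s-1$ additions, even when specific witnesses resist single-position detection.
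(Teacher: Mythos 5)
There is a genuine gap, and it sits exactly at the heart of the lemma. Your setup (the factorization $P_{i,I}=\tilde{Q}_i^I\cdot\hat{P}_i^I$ with $\tilde{Q}_i^I\in\F[Z]$ nonzero, the reduction of the linearly dependent case to $s-1$ summands, and the observation that it suffices to make the projections injective on the span) is fine as far as it goes, but the entire content of the lemma is the claim you leave unproved: that within at most $s-1$ added positions one can always force the projected family to become (or stay) nondegenerate. You yourself exhibit the obstruction --- for $(xy-yx)^2$ no single position breaks the dependency $(1,-1,-1,1)$, so the greedy ``add one rank-increasing position per step'' scheme has no guaranteed progress measure --- and then you defer the fix to ``a careful combinatorial/structural analysis,'' which is precisely the missing argument. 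Moreover, the target you substitute (an $I$ with $|I|\le s-1$ making $P_{1,I},\ldots,P_{s,I}$ linearly \emph{independent}, hence $\pi_I$ injective on the whole span) is strictly stronger than what the lemma asserts, and nothing in the proposal shows it is attainable within the $s-1$ budget; so even the reformulation is not justified. A smaller quantifier issue: in the dependent case the rewritten coefficients $\beta_i+\beta_s c_i$ may vanish, while your inductive hypothesis (as in the lemma statement) only covers all-nonzero coefficient vectors, so the induction would have to be restated before that reduction is legitimate.

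For comparison, the paper's proof gets the decrement from a different place, not from a rank-increase on projections. Fix a nonzero combination $\sum_i\beta_i P_i$, take the \emph{least} position $j_0$ where $\rank\{L_{1,j_0},\ldots,L_{s,j_0}\}=t>1$ (if no such position exists all $P_i$ are proportional and the claim is trivial), factor out the common prefix $P$ (all earlier positions carry proportional forms), and rewrite the sum as $P\bigl(\sum_{k=1}^{t}L_{k,j_0}P''_k\bigr)$, where each $P''_k$ is an $\F$-combination of at most $s-t+1\le s-1$ products of linear forms. Applying an invertible map at position $j_0$ sending $L_{k,j_0}\mapsto x_k$ (Proposition~\ref{invert}) shows some $P''_k\ne 0$; the induction hypothesis applied to that $P''_k$ yields $I'\subseteq\{j_0+1,\ldots,D\}$ with $|I'|\le s-2$ and $P''_{k,I'}\ne 0$, and one sets $I=I'\cup\{j_0\}$. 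Keeping $j_0\in I$ noncommuting keeps the $t$ buckets separated after projection, so nonvanishing survives. The crucial idea your sketch lacks is this grouping step: because $t\ge 2$, the top-level sum of $s$ products collapses into buckets each involving at most $s-1$ products, which is what makes the induction on $s$ (and hence the $|I|\le s-1$ bound) go through without ever needing a per-position rank-increase guarantee.
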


\begin{proof}
The proof is by induction on $s$. The lemma clearly holds for
$s=1$. By induction hypothesis we assume that an index set of size at
most $s-2$ exists for a set of at most $s-1$ polynomials, each of
which is a product of $D$ homogeneous linear forms. The forward
implication is obvious, because making variables commuting can only
facilitate cancellations. We prove the reverse implication.

Suppose that $\sum_{i=1}^s \beta_i P_i \neq 0$ for nonzero $\beta_i,
1\le i\le s$. Let $j_0 \in [D]$ be the least index such that
$\rank\{L_{1,j_0}, \ldots, L_{s,j_0}\} > 1$. If no such index exists
then the $P_i$ are all scalar multiples of each other in which case
$\sum_{i=1}^s \beta_i P_i$ is just $\alpha P_1$ which is zero if and
only if $\alpha P_{1,I}$ is zero, and the implication clearly holds.

We can assume, by renumbering the polynomials that $\{L_{1,j_0},
\dots, L_{t,j_0}\}$ is a maximal linearly independent set in
$\{L_{1,j_0}, \ldots, L_{s,j_0}\}$, where $t>1$.

Then, 
\[
P_i = c_i P L_{i,j_0} L_{i, j_0+1}\ldots L_{i,D} : 1\leq i\leq t
\]
\[
P_i = c_i P \big( \sum_{k=1}^t \gamma_k^{(i)} L_{k,j_0}\big)
L_{i,j_0+1}\ldots L_{i,D} : t+1\leq i\leq s,
\]
where $\{c_i \in \F : 1\leq i\leq s\}$, $\{\gamma_k^{(i)} \in \F :
1\leq k\leq t, t+1\leq i\leq D\}$, and $P\in\F\angle X$ is a product
of homogeneous linear forms (or a scalar). For $1\leq i\leq s$, let
\[
P'_i = c_i \prod_{j=j_0 + 1}^D L_{i,j}.
\]

We can then write
\[
\sum_{i=1}^s \beta_i P_i = P\big(\sum_{i=1}^t  \beta_i L_{i,j_0} P'_i\big) +  
P\big(\sum_{i=t+1}^s  \beta_i L_{i,j_0} P'_i\big). 
\]

Note that $P\big(\sum_{i=t+1}^s  \beta_i L_{i,j_0} P'_i\big)=P\big( \sum_{i=t+1}^s \beta_i\big( \sum_{k=1}^t \gamma^{(i)}_k L_{k,j_0}\big) P'_i\big)$. Now by rearranging terms, we get the following. 

\[
\sum_{i=1}^s \beta_i P_i = P \big( \sum_{k=1}^t L_{k,j_0} P''_k\big)
\] 
where $P''_k = \beta_k P'_k + \beta_{t+1} \gamma^{(t+1)}_k P'_{t+1} + \ldots + \beta_s \gamma^{(s)}_k P'_s$ for $1\leq k\leq t$. 


Now, $L_{k,j_0}, 1\le k\le t$ are linearly independent. Applying
Proposition~\ref{invert}, consider any invertible linear map $A_{j_0}$
applied to position $j_0$ of the polynomial $\sum_{i=1}^t \beta_i P_i$
which maps $A_{j_0}: L_{k,j_0}\mapsto x_k, 1\le k\le t$. Then we
have
\[
A_{j_0}(\sum_{i=1}^s \beta_i P_i) = P\big(\sum_{k=1}^t x_k P''_k\big),
\]
and $A_{j_0}(\sum_{i=1}^t \beta_i P_i)\ne 0$. Thus, not all $P''_k, 1\le
k\le t$ are zero. Assume that $P''_1\neq 0$. Note that $P''_1$ is sum of
at most $s-1$ polynomials, each of which is a product of homogeneous
linear forms. Hence, by induction hypothesis, there is an index set
$I'\subseteq \{j_0+1, \ldots,D\}$ of size at most $s-2$ such that
\[
P''_{1,I'} = \big(\beta_1 P'_{1,I'} + \beta_{t+1}\gamma^{(t+1)}_1 P'_{t+1,I'} + \ldots +
\beta_s\gamma_1^{(s)} P'_{s,I'}\big)\neq 0\footnote{If any $\gamma^{(j)}_1$ is zero, we just work with a smaller sum.}. 
\]

Let $I=I'\cup\{j_0\}$.  Now consider the polynomial $\sum_{i=1}^s \beta_i P_{i,I}$, which we want to prove to be nonzero. Instead, we prove that $A_{j_0} \big(\sum_{i=1}^s \beta_i P_{i,I}\big)$ is nonzero. 
Notice that 
\[
 A_{j_0} \big(\sum_{i=1}^s \beta_i P_{i,I}\big)=\hat{P}\big(\sum_{k=1}^t x_k P''_{k,I'}\big),
\]
where $\hat{P}$ is the commutative polynomial obtained by replacing
$x_i$ by $z_i$ in $P$. Since $\hat{P}$ is a product of linear forms it
remains nonzero. Furthermore, the sum can be zero if and only if each $P''_{k,I'}$ is
zero. However, $P''_{1,I'}$ is nonzero. This
completes the proof.
\end{proof}


\subsection{The black-box identity test}\label{depth3pit}

We now describe a \emph{black-box} randomized polynomial time identity
testing algorithm for depth three regular circuits. Let
$C=\sum_{i=1}^s R_i$ be a polynomial in $\F\angle X$ given as
black-box, where each $R_i$ is a product of $D$ homogeneous linear
forms. By Lemma \ref{isolating-set} there is a set $I\subseteq [D]$ of
size at most $s-1$ such that $C=\sum_{i=1}^s R_i=0$ if and only if
$\tilde{C}=\sum_{i=1}^s R_{i,I}=0$. Similar to
the result in \cite{AMR16}, we will use a small size nondeterministic
automaton to guess this subset $I$ of locations, and substitute
suitable commuting variables at all locations in $[D]\setminus I$. It
will turn out that the transition matrices for each variable $x_i$
corresponding to this automaton will give us the desired black-box
substitution.

Let $|I|=k\leq s-1$. Consider the following nondeterministic finite automaton
$A$ whose transition diagram we depict for $x_i : 1\leq i\leq n$ in
Figure \ref{fig1}. For locations in $[D]\setminus I$, the automaton
uses the block variables $Z=\{z_i : 1\leq i\leq n\},  ~\xi= \{\xi_i :
1\leq i\leq k+1\}$ which are commuting variables. For each index
location $j\in I$ the automaton substitutes $x_i$ by $x_{ij}, 1\le
i\le n$, where the index variables $Z'=\{x_{ij} : 1\leq i\leq n, 1\le
j\le k\}$ are also commuting variables. 

\begin{remark}
Notice that in Lemma~\ref{isolating-set}, the variables occurring in
positions in $I$ were left as noncommuting. However, the automaton we
construct replaces $x_i$ in position $j\in I$ by commuting variable
$x_{ij}$. This transformation for homogeneous polynomials is known to
preserves identities by Claim \ref{set-mult}.
\end{remark}

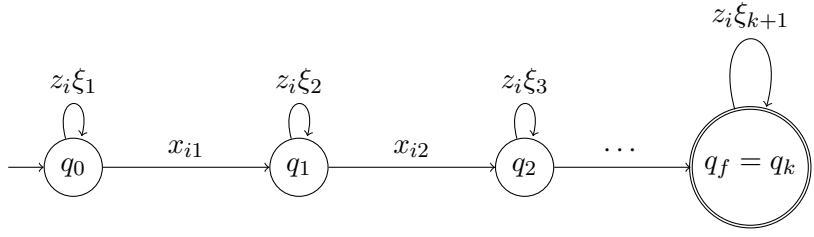
\begin{figure}
\begin{center}
\begin{tikzpicture}
\node(pseudo) at (-1,0){}; \node(0) at (0,0)[shape=circle,draw]
     {$q_0$}; \node(1) at (3,0)[shape=circle,draw] {$q_1$}; \node(2)
     at (6,0)[shape=circle,draw] {$q_2$}; \node(3) at
     (9,0)[shape=circle,draw,double] {$q_f=q_k$}; \path [->] (0) edge
     node [above] {$x_{i1}$} (1) (1) edge node [above] {$x_{i2}$} (2) (2)
     edge node [above] {$\cdots$} (3)
  (0)      edge [loop above]    node [above]  {$z_i\xi_1$}     ()
  (1)      edge [loop above]    node [above]  {$z_i\xi_2$}     ()
(2)      edge [loop above]    node [above]  {$z_i\xi_3$}     ()
  (3)      edge [loop above]    node [above]  {$z_i\xi_{k +1}$}   ()
  (pseudo) edge                                       (0);
\end{tikzpicture}
\caption{The transition diagram for the variable $x_i : 1\leq i\leq n$}\label{fig1}
\end{center}
\end{figure} 

Let
\[
\forall i\in [s]   :   R_i = L_{i,1} \ldots L_{i,D}.   
\]

Let $M_{x_i}$ be the matrix corresponding to variable $x_i, 1\le i\le
n$. When we do this matrix substitution to variables in $R_i$, the
$(0,k)^{th}$ entry of the resulting matrix $M_{R_i}$ is
\[
\widehat{R}_i = \sum_{(j_1,j_2,\ldots,j_k)\in [D]^k}\prod_{j=1}^{j_1 -
  1} L_{i,j}(Z) {\xi_1}^{j_1-1} L_{i,j_1}(Z') \prod_{j=j_1+1}^{j_2-1}
L_{i,j}(Z){\xi_2}^{j_2-j_1-1} L_{i,j_2}(Z')\ldots\ldots
\]

For each $i\in [s]$, the polynomial $\widehat{R}_i\in\F[Z,\xi,Z']$.
The $(0,k)^{th}$ entry of the resulting matrix $M_C$ is 
\[
\sum_{i=1}^s \widehat{R}_i = \sum_{J\in[D]^k} P_J \xi_J,
\] 
where $\xi_J = \xi_1^{j_1-1} \xi_2^{j_1 - j_2 -1}\dots \xi_k^{D-j_k}$
and $P_J=\sum_{i=1}^s {P_{i,J}}$.

By Lemma \ref{isolating-set}, we know that $P_I=\sum_{i=1}^s
P_{i,I}\neq 0$. Thus, $\sum_{i=1}^s \widehat{R}_i$ is nonzero, as the
monomials sets for different $P_J$ are disjoint (ensured by the terms
$\xi_J$). The degree of  $\sum_{i=1}^s \widehat{R}_i$ is $D$. So if $|\F|$ is more than $D$, it can not evaluate to zero on $\F$. This completes the proof of Theorem \ref{depth-3-pit}. 

Now the randomized identity testing algorithm follows by
simply random substitution for variables in the commutative polynomial
computed at the $(0,k)^{th}$ entry of the resulting matrix $M_C$. 
This completes the proof of Corollary \ref{depth-3-algo}. 


\section{Conclusion}\label{conclusion}

The main open problem is to find a randomized polynomial time identity
test for general noncommutative circuits (in the white-box model). Our
result for $+$-regular circuits is a first step towards that. Finding
an efficient randomized black-box identity testing algorithm for
$+$-regular circuits is also an interesting problem. For homogeneous
$\Sigma\Pi^*\Sigma$ circuits, we have obtained such a randomized
black-box identity test.


\begin{thebibliography}{AGKS15}

\bibitem[AGKS15]{AGKS15}
Manindra Agrawal, Rohit Gurjar, Arpita Korwar, and Nitin Saxena,
  \emph{Hitting-sets for {ROABP} and sum of set-multilinear circuits}, {SIAM}
  J. Comput. \textbf{44} (2015), no.~3, 669--697.

\bibitem[AL50]{AL}
Avraham~Shimshon Amitsur and Jacob Levitzki, \emph{Minimal identities for
  algebras}, Proceedings of the American Mathematical Society \textbf{1}
  (1950), no.~4, 449--463.

\bibitem[AM08]{AM08}
Vikraman Arvind and Partha Mukhopadhyay, \emph{Derandomizing the isolation
  lemma and lower bounds for circuit size}, Approximation, Randomization and
  Combinatorial Optimization. Algorithms and Techniques, 11th International
  Workshop, {APPROX} 2008, and 12th International Workshop, {RANDOM} 2008,
  Boston, MA, USA, August 25-27, 2008. Proceedings, 2008, pp.~276--289.
  
  \bibitem[AMR16]{AMR16}
Vikraman Arvind, Partha Mukhopadhyay, and S. Raja, \emph{Randomized Polynomial Time Identity Testing for Noncommutative Circuits}, ECCC Report no. 89, 2016.


\bibitem[AMS08]{AMS08}
Vikraman Arvind, Partha Mukhopadhyay, and Srikanth Srinivasan, \emph{New
  results on noncommutative and commutative polynomial identity testing},
  Proceedings of the 23rd Annual {IEEE} Conference on Computational Complexity,
  {CCC} 2008, 23-26 June 2008, College Park, Maryland, {USA}, 2008,
  pp.~268--279.

\bibitem[BW05]{BW05}
Andrej Bogdanov and Hoeteck Wee, \emph{More on noncommutative polynomial
  identity testing}, 20th Annual {IEEE} Conference on Computational Complexity
  {(CCC} 2005), 11-15 June 2005, San Jose, CA, {USA}, 2005, pp.~92--99.

\bibitem[DL78]{DL78}
Richard~A. DeMillo and Richard~J. Lipton, \emph{A probabilistic remark on
  algebraic program testing}, Inf. Process. Lett. \textbf{7} (1978), no.~4,
  193--195.

\bibitem[FS13]{FS13}
Michael~A. Forbes and Amir Shpilka, \emph{Quasipolynomial-time identity testing
  of non-commutative and read-once oblivious algebraic branching programs},
  54th Annual {IEEE} Symposium on Foundations of Computer Science, {FOCS} 2013,
  26-29 October, 2013, Berkeley, CA, {USA}, 2013, pp.~243--252.

\bibitem[Hya77]{Hya77}
Laurent Hyafil, \emph{The power of commutativity}, 18th Annual Symposium on
  Foundations of Computer Science, Providence, Rhode Island, USA, 31 October -
  1 November 1977, 1977, pp.~171--174.

\bibitem[KSS14]{KSS14}
Neeraj Kayal, Chandan Saha, and Ramprasad Saptharishi, 
 \emph{A Super-polynomial Lower Bound for Regular Arithmetic Formulas},
 Proceedings of the 46th Annual ACM Symposium on Theory of Computing, {STOC},
2014, pp.~146--153.
  
\bibitem[Loh15]{Loh15}
Markus Lohrey, \emph{Equality Testing of Compressed Strings}, WORDS, 2015, pp.~14--26.

\bibitem[MSU97]{MSU97}
  Kurt Mehlhorn, R. Sundar, and Christian Uhrig,
  \emph{Maintaining Dynamic Sequences under Equality Tests in Polylogarithmic Time},
  Algorithmica \textbf{17}, 1997, no.~2, 183--198.
 
 \bibitem[Pla94]{Pla94}
  Wojciech Plandowski,
  \emph{Testing Equivalence of Morphisms on Context-Free Languages},
  Algorithms - {ESA} '94, Second Annual European Symposium, Utrecht,
               The Netherlands, September 26-28, 1994, pp.~460--470.
  

\bibitem[Nis91]{N91}
Noam Nisan, \emph{Lower bounds for non-commutative computation (extended
  abstract)}, STOC, 1991, pp.~410--418.


\bibitem[RS05]{raz05PIT}
Ran Raz and Amir Shpilka, \emph{Deterministic polynomial identity testing in
  non-commutative models}, Computational Complexity \textbf{14} (2005), no.~1,
  1--19.

\bibitem[SS13]{SS13} Nitin Saxena and C. Seshadhri, \emph{From
  {S}ylvester-{G}allai configurations to rank bounds: Improved
  blackbox identity test for depth-3 circuits}, Journal of the ACM,
  \textbf{60}(2013), no~5, 1--34.
  

\bibitem[Sch80]{Sch80}
J.~T. Schwartz, \emph{Fast probabilistic algorithms for verification of
  polynomial identities}, J. ACM \textbf{27} (1980), no.~4, 701--717.

\bibitem[Zip79]{Zippel79}
Richard Zippel, \emph{Probabilistic algorithms for sparse polynomials},
  Symbolic and Algebraic Computation, {EUROSAM} '79, An International
  Symposium on Symbolic and Algebraic Computation, Marseille, France, June 1979,
  Proceedings, 1979, pp.~216--226.

\bibitem[WACT16]{WACT16}
WACT 2016, \emph{Some Accessible Open Problems}, 
\url{https://www.cs.tau.ac.il/~shpilka/wact2016/concreteOpenProblems/openprobs.pdf}, Workshop in Algebraic Complexity Theory, Tel Aviv,  February 7--12,  2016.  


\end{thebibliography}
\end{document}